%% file: main.tex
\documentclass{article} 
\usepackage{iclr2027_conference,times}

\input{math_commands}

\input{preamble}

\usepackage{hyperref}
\usepackage{url}

\title{Not All Mislabels Are Equal: Efficient Task-Aware Auditing with Confidence Sequences}

\author{Fangyuan Lin\thanks{Joint first author.}\ \ , Noah Krever$^*$, Jakub \v{C}ern\'y \& Lily Xu   \\
Columbia University \\
\texttt{\{fl2744,ndk2115,jakub.cerny,lily.x\}@columbia.edu}
}

\iclrfinalcopy 
\begin{document}

\maketitle

\begin{abstract}
\input{sections/abstract}
\end{abstract}

\input{sections/introduction}
\input{sections/problem}
\input{sections/auditing_linear}
\input{sections/priority_aware_sampling}
\input{sections/entropy}
\input{sections/bayesian}

\input{sections/related_work}
\input{sections/conclusion}

\subsubsection*{Acknowledgments}
We are grateful to \href{https://www.naturalstate.org/}{Natural State}, in particular Matt Rogan, for providing the biodiversity monitoring data and expertise that made this work possible. 


\subsubsection*{Reproducibility Statement}
The appendices document the assumptions, proofs, and experimental settings. The supplementary repository provides the implementation, frozen seeds, saved results, figure-generation scripts, and public CIFAR-10N data. The Kenyan biodiversity data are confidential and cannot be redistributed; only anonymized aggregate results and the corresponding experimental code are included.

\subsubsection*{Ethics Statement}
The biodiversity study is a retrospective replay and does not independently validate any credit or payment. CIFAR-10N human annotations are treated as a designated comparison label, not proof that either annotation source is infallible. Prospective use must prespecify the eligible label vocabulary, preserve the recorded randomized design, and account for expert error when verified labels are not definitive.

\subsubsection*{AI Use Statement}
Generative AI tools assisted the sketching and checking of some proofs, literature searches, implementation of experiment and figure generation code, and high-level polishing of this manuscript.

\bibliography{references}
\bibliographystyle{iclr2027_conference}

\clearpage
\appendix
\crefalias{section}{appendix}
\input{appendices/top_of_appendix}
\clearpage
\input{appendices/notation}
\clearpage
\input{appendices/algos_experiments}
\input{appendices/confidence_sequences_appendices}

\end{document}

%% file: math_commands.tex
\usepackage{amsmath,amsfonts,bm}

\def\eqref#1{equation~\ref{#1}}
\def\Eqref#1{Equation~\ref{#1}}

\def\1{\bm{1}}

\DeclareMathAlphabet{\mathsfit}{\encodingdefault}{\sfdefault}{m}{sl}
\SetMathAlphabet{\mathsfit}{bold}{\encodingdefault}{\sfdefault}{bx}{n}

\newcommand{\E}{\mathbb{E}}

\newcommand{\Var}{\mathrm{Var}}



%% file: preamble.tex
\usepackage{amsmath,amssymb,amsthm,bm}
\usepackage{booktabs,array,longtable}
\usepackage{graphicx}
\usepackage{float}
\usepackage{enumitem}
\usepackage{xcolor}
\usepackage{microtype}
\usepackage{url}
\usepackage{hyperref}
\usepackage{tikz}
\usepackage{pgfplots}
\pgfplotsset{compat=1.18}
\usepackage{dsfont}
\usepackage{cleveref}
\crefname{appendix}{appendix}{Appendices}
\Crefname{appendix}{Appendix}{Appendices}
\usepackage{algorithm}
\usepackage{algorithmicx}
\usepackage{algpseudocode}
\usepackage{multirow}
\usepackage{bm}
\usepackage{xspace}
\hypersetup{colorlinks=true,linkcolor=blue!60!black,citecolor=blue!60!black,urlcolor=blue!60!black}
\graphicspath{{figures/}}

\definecolor{methodblue}{RGB}{0,70,200}
\definecolor{narrativegreen}{RGB}{0,120,70}

\theoremstyle{plain}
\newtheorem{theorem}{Theorem}
\newtheorem{lemma}[theorem]{Lemma}
\newtheorem{proposition}[theorem]{Proposition}

\theoremstyle{definition}

\theoremstyle{remark}

\newcommand{\calL}{\mathcal{L}}
\newcommand{\calR}{\mathcal{R}}
\newcommand{\calG}{\mathcal{G}}

\newcommand{\Z}{\mathbb{Z}}
\newcommand{\ind}[1]{\mathds{1}\{#1\}}

\newcommand{\Dsm}{D_1}

\newcommand{\algo}{ATLAS\xspace}
\newcommand{\algofull}{Anytime-valid Task-aware Label Auditing and Sampling\xspace}

\newenvironment{proofsketch}
  {\begin{proof}[Proof sketch]}
  {\end{proof}}

%% file: sections/abstract.tex

We study efficient, adaptive auditing of categorical labels in a fixed dataset, motivated by biodiversity-credit programs in which expert verification of species records is costly and downstream scores (e.g., payment) are more sensitive to errors on certain labels than on others.
Our proposed \algo framework couples anytime-valid inference with a task-aware adaptive auditing policy. Representing label corrections through a confusion matrix, we prioritize which data points to inspect next based on both label uncertainty and the influence of potential label corrections on the downstream score function.
For linear scores such as the mislabel rate, our priority-aware approach certifies the overall label error rate to $\pm 2$ percentage points with 24\% and 11\% fewer audits than random sampling on biodiversity data from Kenya and a widely used CIFAR-10N benchmark, respectively. 
For nonlinear scores such as the Hill--Shannon diversity, we exploit a linear-envelope representation that reduces inference to a family of linear-score audits; priority-aware sampling reaches the target precision with 36\% fewer audits. When distribution-free guarantees are prohibitively costly, we develop a model-based Bayesian alternative that trades assumption-free validity for substantially fewer audits. Our biodiversity case study highlights a gap between statistical validity and practical feasibility: generic methods such as Serfling's bound and existing confidence-sequence procedures are rigorous but can require prohibitively many expert reviews in practice. Together, these results show task-aware auditing can significantly reduce the number of expert audits needed while providing finite-sample, anytime-valid guarantees.\footnote{Code: \url{https://github.com/caPP-lab/auditing}}




%% file: sections/introduction.tex
\section{Introduction}
\label{sec:intro}

Large labeled datasets are routinely used in consequential inference tasks, from model evaluation to scientific analysis. However, the labels in these datasets are not always correct, with annotation errors documented in settings ranging from healthcare \citep{shi2024survey,radsch2023labelling} to remote sensing \citep{pelletier2017effect}. Even widely used machine-learning benchmarks contain nontrivial label noise, with one study estimating an average error rate of 3.3\% \citep{northcutt2021pervasive}. Especially in consequential settings, practitioners may want rigorous uncertainty quantification for downstream quantities computed from the \emph{true} labels, but doing so requires verifying at least some of the labels. When expert verification of an entire dataset is impractical, two natural questions arise: How many data points must be audited to certify a downstream quantity with a desired level of confidence? Which records should be audited to reduce that uncertainty as efficiently as possible?

When audits are costly, \textit{confidence sequences} are useful as they provide confidence intervals that are simultaneously valid over all sample sizes (i.e., anytime-valid), allowing the auditor to inspect data sequentially and stop adaptively without invalidating the coverage guarantee. Recent work develops such guarantees for sampling without replacement \citep{waudby2020confidence,ryu2024gambling} to certify finite-population quantities, e.g., population mean, but these methods were not designed for the precision and affordability that audits need in practice: on our data of 33{,}403 labels, the prior--posterior-ratio confidence sequence of \citet{waudby2020confidence} needs a median of 4,118 audits to certify the mislabel rate to $\pm2$ percentage points at 95.5\% confidence, against 656 for our method. Adaptive sampling without replacement, importance weighting, and control variates already support efficient weighted-average audits \citep{shekhar2023risk}. Our question is how to make these tools effective for \textit{categorical} corrections and their downstream scores. On unlabeled datasets, active-evaluation methods adaptively select samples to label to estimate quantities such as model risk or predictive performance \citep{sawade2012active,kossen2021active,kossen2022active,mittal2024adaptive}, and recent work has begun to combine adaptive label acquisition with anytime-valid inference \citep{xu2024active}. 
Our setting combines adaptive data acquisition with anytime-valid inference: labels are observed but potentially wrong, corrections are categorical and structured, downstream quantities may be nonlinear, and audits must be chosen adaptively without sacrificing validity.

\paragraph{Motivation: Auditing for biodiversity credits.}
Biodiversity credits are purchased by companies and investors to fund nature conservation and restoration worldwide, compensating individual landowners for measurable improvements in ecosystem health \citep{wunder2025biodiversity}. The market is estimated to grow from US\$8M annually to \$69B by 2050 \citep{wef2023biodiversity}, but credits are burdened by high overhead costs and concerns about trust and verification \citep{aide2024biodiversity}. Each project may generate tens or hundreds of thousands of sensor records documenting wildlife presence and abundance, making full expert audit impractical.
Automated species classifiers reduce labeling cost but can have substantial error: one widely used classifier had an 18.2\% error rate on an out-of-sample dataset \citep{tabak2019machine}. 
Because credit payments depend on these metrics, label errors have direct financial consequences, creating a need for efficient expert auditing. In our biodiversity credit data from Kenya, the submitted labels imply a pooled Hill--Shannon diversity of 8.16, compared with only 4.94 after expert validation---a 65\% inflation in the downstream score. An aggregate mislabel rate (e.g., ``13\% of labels are incorrect'') is insufficient because the direction of an error determines its downstream consequence. Incorrectly labeling cattle as lion, for example, may falsely introduce a rare species and substantially inflate estimated richness and taxonomic diversity (and therefore payment), whereas confusing two common, closely related species may have little effect. An effective audit should therefore account not only for where errors are likely, but also for how potential corrections affect the downstream scoring function.

The same structure arises in myriad settings. In model evaluation, for example, noisy benchmark labels impact downstream quantities such as $F_1$ score, balanced accuracy, or model rankings \citep{northcutt2021pervasive}, each of which may depend nonlinearly on the corrected labels. In epidemiology, label corrections may change prevalence estimates \citep{haile2025bias}, while in content moderation they may alter policy-compliance rates.


\begin{figure*}[t]
\centering
\includegraphics[width=\textwidth]{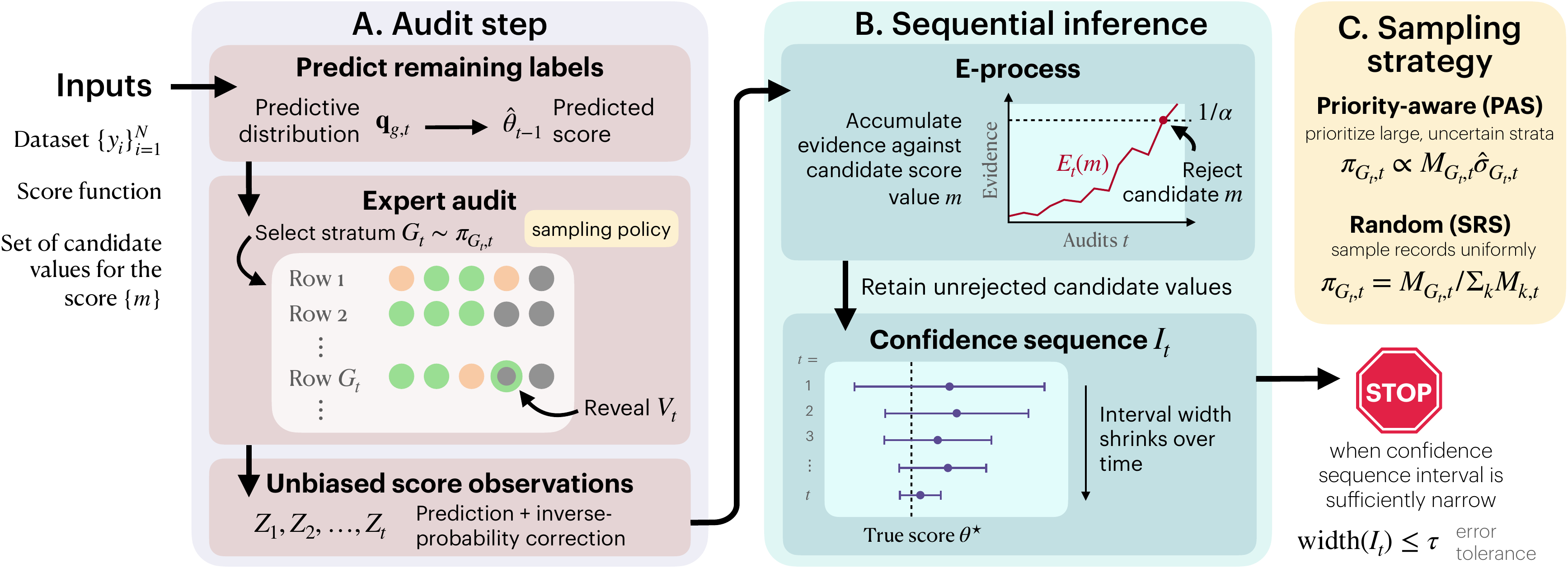}\\[-1mm]
\caption{
\textbf{Our \algo framework for \algofull.} 
\textbf{(A)}~Each step uses past audits to predict the unverified labels and the downstream score.
We select a stratum $G_t$ with probability $\pi_{G_t,t}$, then an expert audit~$V_t$ corrects the predicted score through inverse-probability weighting to produce an unbiased score observation~$Z_t$. 
\ \textbf{(B)}~E-processes accumulate evidence against each candidate score value~$m$, reject values with sufficient evidence, and use remaining values to form a confidence sequence~$I_t$, stopping once the width falls below a tolerance~$\tau$. 
\ \textbf{(C)}~
Our priority-aware sampling strategy directs reviews toward large strata with high predicted score uncertainty. We compare against random sampling as a baseline, but note that different sampling strategies only change the efficacy, not the validity of the confidence sequences.
}
\label{fig:method-overview}
\end{figure*}

To address these challenges, we make the following contributions:
\begin{itemize}[topsep=0pt, parsep=0pt, partopsep=0pt]
\item \textbf{A general framework for efficient label auditing (\cref{fig:method-overview}).}
We introduce an anytime-valid task-aware label auditing and sampling framework (\algo) for fixed categorical datasets that combines e-process construction with adaptive expert review. Our approach certifies downstream quantities while directing audits toward the label corrections that matter most, reducing the number of costly expert audits needed in practice. 

\item \textbf{Efficient anytime-valid inference for structured nonlinear scores.} For nonlinear scores such as Hill--Shannon diversity, we exploit linear-envelope representations to build score-specific confidence sequences from our linear auditing framework. The resulting confidence sequences retain finite-sample, anytime-valid guarantees while supporting efficient priority-aware auditing.

\item \textbf{A model-based Bayesian alternative using the same task-aware sampling strategy.}
When distribution-free guarantees require an impractically large number of audits, we retain the same
task-aware strategy but use a Bayesian model of the confusion matrix for uncertainty quantification, trading assumption-free validity for
greater statistical efficiency.
\end{itemize}


%% file: sections/problem.tex
\section{Problem Formulation}
\label{sec:setting}

A provided dataset contains $N$ records. Record $i\in[N]$ has a submitted label~$y_i$ (known) and verified label~$v_i$ (unknown until verification). Let $\calL :=\{1,\dots,L\}$ be the full normalized label vocabulary, and let $\calR := \{y_i : i\in[N]\}\subseteq\calL $ be the set of submitted labels with $R:=|\calR|$ elements. Thus $y_i\in\calR$ and $v_i\in\calL$.

\paragraph{Confusion matrix.} A scalar mislabel rate records only how often $y_i\neq v_i$ across the entire dataset; it does not capture error rates across different classes. We therefore track a full \emph{confusion matrix}, which specifies the number of mislabels between any two pairs of labels. The diagonal represents how many records are correctly labeled.
The unknown finite-population confusion matrix is $\mathbf{C}^\star\in\Z_{\ge0}^{R\times L}$, where each entry $C^\star_{rs}:=\#\{i:y_i=r,\ v_i=s\}$ indicates the number of data points submitted with row~$r$ and verified with column~$s$. Row~$r$ contains records submitted as label~$r$, while column~$s$ contains records verified as label~$s$. The row sums $N_r:=\sum_{s\in\calL}C^\star_{rs}=\#\{i:y_i=r\}$ are known before auditing. 

\paragraph{Score functions.}
We use \emph{score function} to denote any downstream quantity computed from the corrected labels. Formally, a score function maps the submitted-to-verified confusion matrix $\mathbf{C}$ to a scalar outcome of interest, such as an error rate, model-evaluation metric, or biodiversity score. 
Importantly, we are interested in finite-population inference. When auditing labels we are only interested in the score for the particular submitted dataset, unlike in conventional superpopulation inference where we seek to estimate a parameter of an underlying distribution from which the data are assumed to be i.i.d.\ draws.
For the biodiversity credits setting, the relevant score function is challenging as it is nonlinear and nonsmooth; details in \cref{sec:biodiversity-setting}.


\paragraph{Strata: splitting rows by metadata.} Records often carry metadata, such as the source, location, device, or time period, that may be associated with different correction patterns. Splitting each row by these metadata creates smaller groups of records which we call \emph{strata}. We write $g=(r,a)$ for the stratum containing records with submitted label $r$ and metadata value~$a$. In our biodiversity application, $a$ refers to a camera station, where corrections are strongly clustered. Then $r(g)$ is the submitted label shared by records in stratum~$g$. The size of each stratum is known before any audit, and each row is the union of its strata. When no metadata is used, each row is its own stratum. Stratification is useful when correction behavior varies across metadata-defined subgroups, because it allows the audit to focus more precisely on where uncertainty remains. 



%% file: sections/auditing_linear.tex
\section{Confidence Sequences for Auditing Linear Score Functions}
\label{sec:linear}

We first explore how to turn expert audits into a \emph{confidence sequence}, illustrated in \cref{fig:method-overview}b. Confidence sequences provide uncertainty intervals that are valid after every audit, allowing the auditor to monitor uncertainty and stop as soon as desired precision is reached. In this section, we instantiate the auditing workflow in \cref{fig:method-overview} for scores that are linear in the confusion matrix. 



Let $\mathbf{D} \in \mathbb{R}^{R \times L}$ be a deterministic \emph{contribution matrix}, where the $(r,s)$ element $D_{rs} \in [0,1]$ is the contribution to the score from a data point submitted as label $r$ and verified as label~$s$ and is known in advance. For example, when the score function is the mislabel rate, $\mathbf{D}$ is a binary matrix with $D_{r,s} = \ind{r \neq s}$. We denote the $r$-th row of this matrix as $\mathbf{d}_r^\top$.


A score is linear if it can be expressed as $\theta^\star = \frac{1}{N} \sum_{i=1}^{N} D_{y_i,v_i} = \frac{1}{N} \sum_{r \in \calR} \sum_{s \in \calL} C^\star_{rs} D_{rs}$. With linear scores, each record contributes a fixed, known amount determined by its submitted and verified labels. If all verified labels were known, the true score $\theta^\star$ could be computed exactly. During the audit, however, only a subset of the data have been reviewed.

\paragraph{A starting point: Serfling's inequality.}
Anyone familiar with sampling without replacement might first reach for Serfling's inequality \citep{serfling1974probability}. Review $n$ records drawn uniformly at random, and report the fraction $\bar X_n$ whose submitted labels are wrong. The mislabel indicators lie in $\{0,1\}$. For a sample size fixed in advance,
\[
\bar X_n \pm \varepsilon_n, \qquad
\varepsilon_n = \sqrt{\Big(1-\frac{n-1}{N}\Big)\frac{\log(2/\alpha)}{2n}},
\]
covers the true finite-population error rate with probability at least $1-\alpha$. The guarantee is finite-sample and distribution-free (but not anytime-valid, unlike confidence sequences). Its width shrinks rapidly as the sample exhausts the population, making the problem appear nearly solved. However, at a practically useful precision for auditing, say $\pm 2$ percentage points, Serfling's bound requires 4,154 reviews, so an expert must manually verify about one in eight records (\cref{fig:kenyan-mislabel-rate}). The Wald interval (which is \emph{not range-based}, unlike Serfling's) imposes a lower audit burden, requiring 1,481 reviews to reach a $\pm 2$ percentage-point interval, but offers only an asymptotic approximation, not a finite-sample guarantee. We therefore seek greater efficiency while retaining rigorous finite-sample validity.




\paragraph{Constructing an unbiased audit observation.}
We first construct a quantity~$Z_t$ whose conditional mean, given everything observed before audit~$t$, is the downstream score~$\theta^\star$:
\begin{equation}
  \mathbb E[Z_t\mid\mathcal F_{t-1}]=\theta^\star\qquad\text{for every labeling}
  \label{eq:need}
\end{equation}
where $\mathcal F_{t-1}$ denotes the audit history available before audit $t$. 
Let $\mathcal A_{t-1}$ denote the records audited so far. For each stratum $g$, let $M_{g,t}=|(g\setminus\mathcal A_{t-1}|$ be the number of unaudited records remaining. Their score contributions have mean $\mu_{g,t}=\frac1{M_{g,t}}\sum_{i\in g \setminus \mathcal{A}_{t-1}}D_{y_iv_i}$ and variance $\sigma^2_{g,t}=\frac1{M_{g,t}}\sum_{i\in g \setminus \mathcal{A}_{t-1}}(D_{y_i,v_i}-\mu_{g,t})^2$. 
The linear score can be split into:
\begin{equation}
  N\theta^\star
  =\sum_{i\in\mathcal A_{t-1}}D_{y_iv_i}+\sum_g M_{g,t}\,\mu_{g,t}.
  \label{eq:split}
\end{equation}
After $t-1$ audits, the first sum is known and the second must be learned (or estimated) by the auditor. 
Before audit~$t$, using only $\mathcal F_{t-1}$, the auditor fixes sampling probabilities $\{\pi_{g,t}\}_g$ (where $\sum_g\pi_{g,t}=1$) and a predicted verified-label distribution $\mathbf{q}_{g,t} \in \Delta^{L-1}$ for each nonempty stratum~$g$. 
Using these predictions, we predict the downstream score $\theta^\star$ via the estimator:
\begin{align}
    \hat{\theta}_{t-1} = \frac{1}{N} \Big(\sum_{i \in \mathcal{A}_{t-1}} D_{y_i,v_i} + \sum_{g} M_{g,t} \cdot  \mathbf{d}_{r(g)}^\top \mathbf{q}_{g,t} \Big) \ .
    \label{eq:score-estimator}
\end{align}
The first term uses the verified contributions of audited records, while the second predicts the contributions of the remaining records. 
The auditor then draws a stratum~$G_t$ according to
$\Pr(G_t=g\mid\mathcal F_{t-1})=\pi_{g,t}$,
selects a record $J_t$ uniformly from $G_t\setminus\mathcal A_{t-1}$, and reveals its verified label $V_t=v_{J_t}$. Thus an unaudited record in stratum $g$ is selected with probability $\pi_{g,t}/M_{g,t}$.

Because strata may be sampled nonuniformly, simply replacing the predicted contribution with the newly verified one would generally introduce bias. To update this prediction with the next audit, we therefore use an inverse-probability-weighted residual correction \citep{horvitz1952generalization}:
\begin{equation}
    Z_t = \hat{\theta}_{t-1} + \frac{M_{G_t,t}}{N \cdot \pi_{G_t,t}} \left( D_{r(G_t),V_t} - \mathbf{d}_{r(G_t)}^\top \mathbf{q}_{G_t, t} \right).
    \label{eq:Zt}
\end{equation}
The residual compares the newly verified contribution~$D_{r(G_t),V_t}$ with its prediction~$\mathbf{d}_{r(G_t)}^\top \mathbf{q}_{G_t, t}$, while the inverse-probability weight $\frac{M_{G_t,t}}{N \pi_{G_t,t}}$ corrects for nonuniform sampling. 
Consequently, $Z_t$ is an unbiased estimate for $\theta^\star$ for any predictable predictions and sampling probabilities, as formalized in \cref{lem:unbiased}. 
The prediction term acts as a control variate, analogous to \citet{sarndal2003model}, and improves efficiency while preserving validity.

\begin{lemma}[One audit, one unbiased observation]
\label{lem:unbiased}
If $\mathbf{q}_{\cdot,t}$ and $\bm{\pi}_{\cdot,t}$ are predictable and $\pi_{g,t}>0$
whenever $M_{g,t}>0$, then $\E[Z_t\mid\mathcal F_{t-1}]=\theta^\star$ for every labeling. Define the stratum-level prediction error as $b_{g,t}=\mu_{g,t}- \mathbf{d}_{r(g)}^\top \mathbf{q}_{g,t}$, then the conditional variance is:
\begin{equation}
  \Var(Z_t\mid\mathcal F_{t-1})=\frac1{N^2}\left[\sum_g\frac{M_{g,t}^2}{\pi_{g,t}}\bigl(\sigma^2_{g,t}+b_{g,t}^2\bigr)
  -\Bigl(\sum_g M_{g,t}b_{g,t}\Bigr)^2\right].
  \label{eq:varZ}
\end{equation}
\end{lemma}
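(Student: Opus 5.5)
The plan is to condition on $\mathcal F_{t-1}$ throughout. Given the history, every quantity other than $(G_t,V_t)$ is fixed: the audited set $\mathcal A_{t-1}$, the remaining counts $M_{g,t}$, the stratum means $\mu_{g,t}$ and variances $\sigma^2_{g,t}$ (deterministic functions of the fixed but unknown labeling), the predictable $\mathbf q_{g,t}$ and $\bm\pi_{\cdot,t}$, and hence $\hat\theta_{t-1}$.

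Write the correction term as $W_t:=\frac{M_{G_t,t}}{N\pi_{G_t,t}}(D_{r(G_t),V_t}-\mathbf d_{r(G_t)}^\top\mathbf q_{G_t,t})$, so that $Z_t=\hat\theta_{t-1}+W_t$. The mean and variance then follow from a two-stage sampling computation: first over $G_t\sim\bm\pi_{\cdot,t}$, then over $J_t$ uniform on $G_t\setminus\mathcal A_{t-1}$. Strata with $M_{g,t}=0$ contribute nothing, and the positivity assumption guarantees that every stratum with $M_{g,t}>0$ has $\pi_{g,t}>0$. This is what makes the inverse weights well defined and keeps every remaining record reachable.

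\textbf{Mean.} Conditional on $G_t=g$, the uniform draw gives $\E[D_{r(g),V_t}\mid\mathcal F_{t-1},G_t=g]=\mu_{g,t}$, because all records in $g$ share the row $r(g)$. Hence $\E[W_t\mid\mathcal F_{t-1},G_t=g]=\frac{M_{g,t}}{N\pi_{g,t}}b_{g,t}$. Averaging over $g$ with weights $\pi_{g,t}$ cancels the $\pi$'s and gives $\frac1N\sum_g M_{g,t}b_{g,t}$. Adding $\hat\theta_{t-1}$ from \eqref{eq:score-estimator}, the prediction terms $M_{g,t}\mathbf d_{r(g)}^\top\mathbf q_{g,t}$ cancel exactly. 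What remains is $\frac1N(\sum_{i\in\mathcal A_{t-1}}D_{y_iv_i}+\sum_g M_{g,t}\mu_{g,t})$, which equals $\theta^\star$ by the split \eqref{eq:split}.

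\textbf{Variance.} Since $\hat\theta_{t-1}$ is $\mathcal F_{t-1}$-measurable, $\Var(Z_t\mid\mathcal F_{t-1})=\Var(W_t\mid\mathcal F_{t-1})=\E[W_t^2\mid\mathcal F_{t-1}]-(\E[W_t\mid\mathcal F_{t-1}])^2$. The subtracted square is $\frac1{N^2}(\sum_g M_{g,t}b_{g,t})^2$, taken directly from the mean computation. For the second moment, condition on $G_t=g$ and decompose the residual $D_{r(g),V_t}-\mathbf d_{r(g)}^\top\mathbf q_{g,t}=(D_{r(g),V_t}-\mu_{g,t})+b_{g,t}$. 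The first piece has conditional mean zero and conditional second moment $\sigma^2_{g,t}$, using the population (divide-by-$M_{g,t}$) variance, which matches the uniform draw. The cross term therefore vanishes, and $\E[\text{residual}^2\mid G_t=g]=\sigma^2_{g,t}+b_{g,t}^2$. Multiplying by $\frac{M_{g,t}^2}{N^2\pi_{g,t}^2}$ and averaging with weights $\pi_{g,t}$ gives $\frac1{N^2}\sum_g\frac{M_{g,t}^2}{\pi_{g,t}}(\sigma^2_{g,t}+b_{g,t}^2)$. Combining the two pieces yields \eqref{eq:varZ}.

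There is no real obstacle. The only points needing care are the following:
\begin{itemize}
\item $\sigma^2_{g,t}$ must be the population variance (dividing by $M_{g,t}$, not $M_{g,t}-1$), which is what the uniform draw produces.
\item The within-stratum contribution depends on $V_t$ only through the fixed row $r(g)$.
\item Empty strata must be handled by convention, with summands where $M_{g,t}=0$ set to $0$.
\end{itemize}
Everything else is bookkeeping via the tower property.
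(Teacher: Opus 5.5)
Your proposal is correct and follows the paper's proof essentially step for step. Both arguments condition on $G_t$ and then on the uniform within-stratum draw, and both let the weights $\pi_{g,t}$ and the prediction terms cancel via the split of $N\theta^\star$. Both then obtain the variance as the conditional second moment of the correction minus its squared mean. Your explicit decomposition of the residual into $(D_{r(g),V_t}-\mu_{g,t})+b_{g,t}$ just spells out the $\sigma^2_{g,t}+b^2_{g,t}$ step that the paper states without comment.
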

Better predictions reduce variance (hence improving efficiency), while unbiasedness holds regardless of prediction quality.
In experiments, $\mathbf{q}_{g,t}$ is the posterior mean under a hierarchical Dirichlet model \citep{gelman1995bayesian}, updated using only the audits observed before time~$t$.


\paragraph{E-processes.} 
For each candidate score value $m$, we construct two e-processes that accumulate evidence against the null hypothesis $\theta^\star=m$. The confidence sequence is then obtained by retaining candidate values for which neither e-process has crossed the rejection threshold. Ville's inequality gives simultaneous coverage (i.e., anytime validity) over all audit times.

Following \citet{waudbysmith2024estimating},
let $\lambda^\pm_i(m)\geq 0$ be predictable betting stakes. Under this null hypothesis, \cref{lem:unbiased} implies that each factor below has conditional mean one:
\begin{equation}
  E^+_t(m)=\prod_{i\leq t}\bigl\{1+\lambda^+_i(m)(Z_i-m)\bigr\},\qquad
  E^-_t(m)=\prod_{i\leq t}\bigl\{1-\lambda^-_i(m)(Z_i-m)\bigr\}.
\end{equation}
Thus at the true value $m = \theta^\star$, both processes are nonnegative martingales with initial value one. 


The only requirement on the stakes $\lambda^{\pm}$ is that they are predictable and keep every factor nonnegative, for every possible revealed label. This is always achievable because $D_{rs}\in[0,1]$, so the range of $Z_t$ is known before each audit, and such stakes can be chosen. \Cref{app:linear-proofs} gives this range and the stakes used in our experiments.

\begin{theorem}[Confidence sequence for a linear score]
\label{thm:linear-cs}
Let $I_t=\{m:E^+_s(m)<2/\alpha, E^-_s(m)<2/\alpha\ \text{for all}\ s\leq t\}$. Then $\Pr\{\theta^\star\in I_t\, \forall t\}\geq1-\alpha$ for every labeling.
\end{theorem}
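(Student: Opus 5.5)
The plan is the standard Ville-plus-union-bound argument, carried out only at the single true value $m=\theta^\star$. The event $\{\theta^\star\notin I_t \text{ for some } t\}$ is exactly the event that either $E^+_s(\theta^\star)\ge 2/\alpha$ or $E^-_s(\theta^\star)\ge 2/\alpha$ for some $s$. So it suffices to show that each of $E^\pm_t(\theta^\star)$ is a nonnegative supermartingale (in fact a martingale) with initial value one. Ville's inequality then bounds each crossing probability by $\alpha/2$, and a union bound over the two processes gives total failure probability at most $\alpha$. Since $\theta^\star$ is a deterministic function of the fixed labeling, we can fix the labeling throughout. All randomness comes from the auditor's draws of $G_t$ and $J_t$ (and any auxiliary randomization in choosing $\pi$, $\mathbf q$, $\lambda$), so the guarantee holds "for every labeling".

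The key steps, in order, are as follows.

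First, fix the filtration $\mathcal F_t$ generated by the audit history up to time $t$. Note that $\hat\theta_{t-1}$, $\pi_{\cdot,t}$, $\mathbf q_{\cdot,t}$, $M_{\cdot,t}$ and $\lambda^\pm_t(m)$ are all $\mathcal F_{t-1}$-measurable, while $Z_t$ is $\mathcal F_t$-measurable.

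Second, by \cref{lem:unbiased}, $\E[Z_t-\theta^\star\mid\mathcal F_{t-1}]=0$. Hence, using predictability of $\lambda^\pm_t$,
\[
\E\bigl[1\pm\lambda^\pm_t(\theta^\star)(Z_t-\theta^\star)\mid\mathcal F_{t-1}\bigr]=1 ,
\]
so $E^\pm_t(\theta^\star)=E^\pm_{t-1}(\theta^\star)\cdot\bigl(1\pm\lambda^\pm_t(\theta^\star)(Z_t-\theta^\star)\bigr)$ has conditional mean $E^\pm_{t-1}(\theta^\star)$.

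Third, nonnegativity. The stakes are required to keep every factor nonnegative for every possible revealed label. This is feasible because $D_{rs}\in[0,1]$ gives a known, $\mathcal F_{t-1}$-measurable range for $Z_t$. So each product stays $\ge 0$, and $E^\pm_0=1$.

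The process is only defined while unaudited records remain, i.e.\ for $t\le N$. For $t>N$ we may freeze it: once all records are audited, $Z_t$ can be set to the now-known $\theta^\star$, making every factor equal to one. This keeps the martingale property over all $t$.

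Fourth, apply Ville's inequality to the nonnegative martingale $E^+_t(\theta^\star)$:
\[
\Pr\{\exists t: E^+_t(\theta^\star)\ge 2/\alpha\}\le \alpha/2 ,
\]
and likewise for $E^-$. The union bound then gives $\Pr\{\exists t:\theta^\star\notin I_t\}\le\alpha$. Note that $I_t$ uses "for all $s\le t$", so membership at every $t$ is equivalent to neither process ever crossing up to $t$. The claim is therefore exactly the complement of the crossing event.

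The main obstacle is making sure the nonnegativity requirement is stated for the stakes evaluated at $m=\theta^\star$ without knowing $\theta^\star$. The condition must hold for every candidate $m$, or at least for all $m$ in the feasible range $[0,1]$, for every possible value of $Z_t$. It then holds at the unknown true value automatically. I would make this explicit by requiring $\lambda^+_t(m)\le 1/(m-z_{\min,t})$ and $\lambda^-_t(m)\le 1/(z_{\max,t}-m)$, where $[z_{\min,t},z_{\max,t}]$ is the predictable range of $Z_t$ from \cref{app:linear-proofs}.

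A secondary subtlety is that $\theta^\star$ is not random, so "martingale under the null" should be read as a statement about the process at the particular deterministic value $\theta^\star$. No hypothesis testing over a random parameter is needed.
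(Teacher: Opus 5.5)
Your proposal is correct and follows the paper's proof. Both evaluate the two betting processes only at $m=\theta^\star$, use \cref{lem:unbiased} with predictable stake caps derived from the range $[\underline{Z}_t,\overline{Z}_t]$ to get nonnegative martingales starting at one, apply Ville's inequality at threshold $2/\alpha$ to each, and finish with a union bound. The only cosmetic difference is that the paper caps stakes at $0.9/(\theta^\star-\underline{Z}_t)$, so factors stay at least $0.1$, and drops the cap when its denominator is nonpositive; your cap of $1/(m-z_{\min,t})$ (applied only when $m>z_{\min,t}$) already suffices for nonnegativity.
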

\begin{proofsketch}
Applying Ville's inequality (\cref{lem:ville}) to the two martingales at $m = \theta^\star$, each crosses $2/\alpha$ with probability at most $\alpha/2$, uniformly over time. 
A union bound over the two tails gives total error at most $\alpha$. Details in \cref{app:linear-proofs}.
\end{proofsketch}
The same e-process construction extends to ratios of linear scores: the recall $\rho^\star=C^\star_{ss}/\sum_rC^\star_{rs}$ of species $s$ and the $F_1$ score are handled the same way in \cref{app:scores}.

%% file: sections/priority_aware_sampling.tex
\section{Priority-Aware Sampling}
\label{sec:pas}



The confidence sequence in \cref{sec:linear} determines when an audit may stop. We now consider what to audit next. Because different label corrections can vary greatly in their impact on the downstream score, we aim to prioritize expert audits toward strata whose unresolved labels contribute most to uncertainty in that score, a process illustrated in \cref{fig:method-overview}a.

Under exact predictions ($b_{g,t} = 0$ and $\mu_{g,t} = \mathbf{d}_{r(g)}^\top \mathbf{q}_{g,t}$), \cref{eq:varZ} reduces to $\operatorname{Var}(Z_t\mid\mathcal F_{t-1})
=\frac1{N^2}\sum_g\frac{M_{g,t}^2\sigma_{g,t}^2}{\pi_{g,t}}$. This is minimized by $\pi_{g,t} \propto M_{g,t} \sigma_{g,t}$, suggesting we prioritize strata in proportion to their remaining (unaudited) size times uncertainty in their score contributions. 
\begin{proposition}[Priority-aware allocation]
\label{prop:neyman}
If the predictions are exact, \eqref{eq:varZ} is minimized subject to floors
$\pi_{g,t}\geq\ell_{g,t} > 0$ by $\pi_{g,t}=\max\{\ell_{g,t},M_{g,t}\sigma_{g,t}/\nu\}$,
where $\nu > 0$ normalizes the probabilities. Without floors, the minimum is
$N^{-2}(\sum_gM_{g,t}\sigma_{g,t})^2$, the Neyman variance
\citep{neyman1934two}.
\end{proposition}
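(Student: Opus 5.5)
Under exact predictions, \cref{lem:unbiased} reduces the objective to $f(\bm\pi)=N^{-2}\sum_g a_g^2/\pi_{g,t}$ with $a_g:=M_{g,t}\sigma_{g,t}\ge 0$. We minimize $f$ over the simplex $\{\sum_g\pi_{g,t}=1\}$ intersected with the floors $\pi_{g,t}\ge\ell_{g,t}$, assuming $\sum_g\ell_{g,t}\le 1$ so the set is nonempty. On the positive orthant $f$ is convex, since each term $a_g^2/\pi$ is convex for $\pi>0$. The feasible set is convex and compact, and $f$ is continuous on it because the floors keep every $\pi_{g,t}$ bounded away from zero. Hence a minimizer exists, and the KKT conditions are both necessary and sufficient. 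The plan is to write down the Lagrangian, read off the stationarity conditions, and check that the stated $\max$ form satisfies them.

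For the Lagrangian take
\[
\mathcal L=\sum_g a_g^2/\pi_g+\kappa\Bigl(\sum_g\pi_g-1\Bigr)-\sum_g\eta_g(\pi_g-\ell_g),
\]
with multipliers $\eta_g\ge0$ and complementary slackness. Stationarity gives $-a_g^2/\pi_g^2+\kappa-\eta_g=0$. This forces $\kappa>0$ unless all $a_g=0$, a trivial case in which any feasible $\bm\pi$ is optimal. We then split into two cases. If the floor is slack ($\eta_g=0$), then $\pi_g=a_g/\sqrt\kappa$. If the floor binds, then $\pi_g=\ell_g$ and $\eta_g=\kappa-a_g^2/\ell_g^2\ge0$, which says exactly $a_g/\sqrt\kappa\le\ell_g$. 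Setting $\nu:=\sqrt\kappa$, both cases combine into $\pi_g=\max\{\ell_g,a_g/\nu\}$. Conversely, this form satisfies every KKT condition by construction, so it is optimal.

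It remains to show that $\nu$ exists and is unique. The map $h(\nu)=\sum_g\max\{\ell_g,a_g/\nu\}$ is continuous and nonincreasing on $(0,\infty)$. It tends to $\sum_g\ell_g\le1$ as $\nu\to\infty$ and to $\infty$ as $\nu\to0$, provided some $a_g>0$. So some $\nu$ gives $h(\nu)=1$. Strict convexity of $f$ in the coordinates with $a_g>0$ gives uniqueness of those coordinates. Coordinates with $a_g=0$ simply sit at their floors when $\sum_g \ell_g<1$ is handled by some positive $a_g$ absorbing the slack. This bookkeeping, together with the boundary behaviour of $h$, is the main technical point, though it is routine.

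For the unfloored case, apply Cauchy--Schwarz:
\[
\Bigl(\sum_g a_g\Bigr)^2=\Bigl(\sum_g\frac{a_g}{\sqrt{\pi_g}}\sqrt{\pi_g}\Bigr)^2\le\sum_g\frac{a_g^2}{\pi_g}\sum_g\pi_g=\sum_g\frac{a_g^2}{\pi_g}.
\]
Equality holds iff $\pi_g\propto a_g$. Therefore the minimum value is $N^{-2}(\sum_gM_{g,t}\sigma_{g,t})^2$, attained at $\pi_{g,t}=a_g/\sum_h a_h$. This is the $\ell\to0$ limit of the floored solution with $\nu=\sum_g a_g$.
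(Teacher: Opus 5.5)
Your proof is correct and takes the same route as the paper. Both reduce the variance under exact predictions to $N^{-2}\sum_g (M_{g,t}\sigma_{g,t})^2/\pi_{g,t}$, use convexity plus the KKT conditions to get the floored form $\max\{\ell_g, M_{g,t}\sigma_{g,t}/\nu\}$, and use Cauchy--Schwarz for the unfloored Neyman minimum; you also fill in details the paper leaves implicit, namely the existence of the normalizer $\nu$ via monotonicity of $\sum_g\max\{\ell_g,a_g/\nu\}$ and the degenerate case where all $a_g=0$.
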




In practice, we estimate the standard deviation $\sigma_{g,t}$ from the predicted verified-label distribution $\mathbf{q}_{g,t}$. When the scoring function is the mislabel rate, each contribution is Bernoulli. If $\hat{e}_{g,t} = \mathbf{d}_{r(g)}^\top \mathbf{q}_{g,t} = \widehat{\Pr}(\text{verified label} \neq r(g) \mid \mathcal{F}_{t-1})$ is the predicted error rate in stratum $g$, then the estimated standard deviation is $\hat{\sigma}_{g,t} = \sqrt{\hat{e}_{g,t}(1-\hat{e}_{g,t})}$. 

Then to ensure continued exploration, we mix the resulting priority allocation with sampling proportional to the number of remaining records in each stratum $g$:
\begin{equation}\label{eq:pas-allocation}
    \pi_{g,t} = \varepsilon \frac{M_{g,t}}{\sum_k M_{k,t}} + (1-\varepsilon) \frac{M_{g,t} \hat{\sigma}_{g,t}}{\sum_k M_{k,t} \hat{\sigma}_{k,t}} \ .
\end{equation}
The first term guarantees continued exploration of all remaining records, while the second approximates the variance-minimizing allocation of \cref{prop:neyman}. The end-to-end priority-aware sampling procedure is presented as \cref{alg:pas} in \cref{app:algos}.



\subsection{Experimental Results: Certifying the Mislabel Rate}
\label{sec:linear-results}

\Cref{tab:errrate} compares five methods for certifying the overall mislabel rate on the Kenyan biodiversity and CIFAR-10N datasets. Our priority-aware sampling (PAS) method and simple random sampling (SRS) use the same confidence sequence approach as introduced in \cref{sec:linear}, in \cref{thm:linear-cs}, and differ only
in the selection of the next record.

On the Kenyan biodiversity dataset, PAS reaches a half-width of two percentage points after a median of 656 expert reviews, compared with 864 under SRS, a reduction of
24\%. On CIFAR-10N, PAS reaches the same target after 768 audits, compared with 862 under SRS, a reduction of 11\%. The smaller CIFAR gain is consistent with its more balanced submitted-label rows and less concentrated residual uncertainty. In both settings, PAS retains the anytime-valid guarantee of the underlying confidence sequence while reaching the target
precision sooner. \Cref{app:synthetic-experiment-settings} further studies when PAS improves over SRS.

The remaining methods provide established random-sampling benchmarks. The prior--posterior-ratio method \citep{waudby2020confidence} is anytime-valid but does not exploit priority-aware allocation. The Wald interval is an asymptotic reference computed using the true population variance, while Serfling's inequality provides a finite-sample bound for sampling without replacement. PAS reaches the two-percentage-point target substantially earlier than all three benchmarks on both datasets. The trajectories of the confidence sequence widths are shown
in \cref{fig:kenyan-mislabel-rate}; the qualitatively similar CIFAR-10N comparison appears in \cref{app:cifar-mislabel-figure}. Additional settings for the five procedures are given in \cref{app:ns-linear-settings}  and \cref{app:cifar-linear-settings}.

\begin{table}[t]
\caption{Audits required to certify the mislabel rate to a half-width of 2 percentage points at 95.5\% nominal confidence. The Kenyan biodiversity dataset has 33,403 total data points and a true mislabel rate of 17.97\%. CIFAR-10N has 50,000 data points with 17.23\% mislabels. We report median stopping times over 20 audit paths.
}
\label{tab:errrate}
\centering
\footnotesize
\setlength{\tabcolsep}{4pt}
\begin{tabular}{@{}lllrr@{}}
\toprule
& & & \multicolumn{2}{c}{\# audits to $\pm 2$ percentage points} \\
\cmidrule(lr){4-5}
\textbf{Method}
& \textbf{Sampling strategy}
& \textbf{Guarantee}
& \textbf{Kenyan biodiversity}
& \textbf{CIFAR-10N} \\
\midrule
ATLAS-PAS (ours)
& priority-aware
& finite-sample, anytime
& \textbf{656}
& \textbf{768} \\

ATLAS-SRS
& random
& finite-sample, anytime
& 864
& 862 \\

Wald
& random
& asymptotic, fixed \(n\)
& 1,481
& 1,434 \\

PPR
& random
& finite-sample, anytime
& 4,118
& 3,973 \\

Serfling
& random
& finite-sample, fixed \(n\)
& 4,154
& 4,332 \\
\bottomrule
\end{tabular}
\end{table}

\begin{figure}[ht]
    \centering
    \includegraphics[height=.35\linewidth]{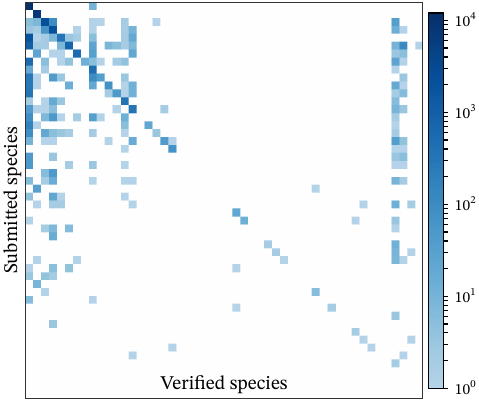}
    \hfill
    \includegraphics[height=.35\linewidth]{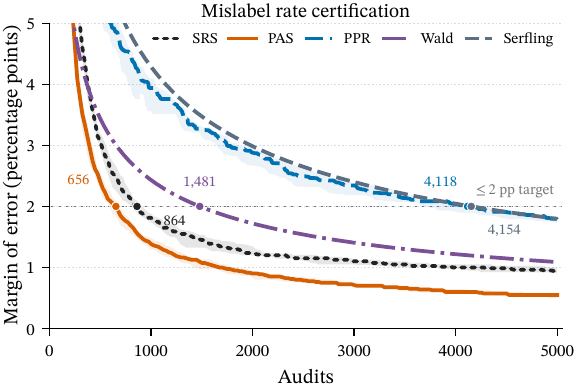}
    \caption{Audits on the Kenyan biodiversity dataset.
\ \textbf{(Left)}~Submitted-to-verified confusion matrix. Diagonal entries are correct labels, off-diagonal entries are mislabels. Color indicates the number of records on a log scale.%
\ \textbf{(Right)}~ Margin of error (half-width of the 95.5\% mislabel-rate confidence sequence) as the number of audits increases, showing faster contraction under our ATLAS-PAS approach than four baselines in estimating uncertainty around the true mislabel rate. Curves show medians over 20 audit paths with the 25th--75th percentiles. Dots mark the audits required to reach a half-width of two percentage points. The corresponding CIFAR-10N dataset (\cref{app:cifar-mislabel-figure}) exhibits the same relative performances, and numerical results for both settings are reported in \cref{tab:errrate}.}
    \label{fig:kenyan-mislabel-rate}
\end{figure}

%% file: sections/entropy.tex
\section{Auditing Nonlinear Scores via Linear Envelopes}
\label{sec:entropy}

In practice, downstream scores are often nonlinear, so the linear construction of \cref{sec:linear} cannot be applied directly with a single fixed contribution matrix. For concave scores, we show they nevertheless admit an anytime-valid audit by exploiting their representation as envelopes of linear scores: a single predictable tangent direction yields a valid upper confidence bound, while the lower bound requires ruling out every plausible label composition, which we certify by convex optimization for Shannon entropy. 
We develop this construction for an entropy term and apply the same principle to other nonlinear scores on CIFAR-10N.

\paragraph{Concave scores are envelopes of linear scores.}
Let $\mathbf{p}^\star\in\Delta^{L-1}$ denote the true corrected label proportions, with
$p_s^\star=\frac{1}{N}\sum_r C_{rs}^\star$.
Then define $H : \Delta^{L-1} \to \mathbb{R}$ as a concave, differentiable (on its relative interior) score function of a label composition $\mathbf{p}$. For any reference composition $\mathbf{\bar p}\in\Delta^{L-1}$, concavity implies that the tangent plane to $H$ at $\mathbf{\bar p}$ upper bounds $H$. Using $\sum_sp_s=1$, this tangent plane can be written as a linear score:
\begin{equation}\label{ent:eq:tangent}
  H(\mathbf p)\ \le\ \sum_{s\in\mathcal L}p_s\,\psi_s(\mathbf{\bar p}),\qquad \text{where }
  \psi_s(\mathbf{\bar p})=H(\mathbf{\bar p})+\partial_s H(\mathbf{\bar p})-\sum_j{\bar p}_j\,\partial_jH(\mathbf{\bar p}),
\end{equation}
with equality at $\mathbf {\bar p}= \mathbf p$. 
Thus $H(\mathbf p)=\inf_{\mathbf{\bar p}}\sum_sp_s\psi_s(\mathbf{\bar p})$ is the infimum of these linear upper bounds over reference compositions $\bar{\mathbf p}$. So a record verified as
$s$ contributes the known amount $\psi_s(\mathbf{\bar p})$ to the linear score indexed by $\mathbf{\bar p}$. 
Specializing $H$ to Shannon entropy, $H(\mathbf{p})=-\sum_{s\in\mathcal L}p_s\log p_s$ gives  $\psi_s(\mathbf{\bar p})=\log(1/{\bar p}_s)$, so the tangent plane is the cross-entropy
$\mathrm{CE}(\mathbf p,\mathbf{\bar p}):=\sum_sp_s\log(1/{\bar p}_s)=H(\mathbf p)+\mathrm{KL}(\mathbf p\|\mathbf{\bar p})$. 
\Eqref{ent:eq:tangent} is therefore Gibbs' inequality $\mathrm{KL}\ge0$
\citep{cover1991elements}.

\paragraph{Upper bound: one linear audit.}
 
Any predictable reference $\mathbf {\bar p}$ defines a linear score that upper bounds $H(\mathbf{p}^\star)$; its choice affects only the confidence width but not validity. We therefore take the best guess of $\mathbf {p^\star}$ available before the audit, i.e., the plug-in composition $\mathbf{\hat p}_{t-1}$, which
counts audited records at their verified labels and unaudited records at their
predicted distributions $\mathbf{q}_{g,t}$. Applying the linear auditing procedure of \cref{sec:linear} to this linear tangent score yields the upper confidence sequence $U_t$ (\cref{app:entropy}).





\paragraph{Lower bound: ruling out every composition.}
A tangent plane lies above $H$, so a single cross-entropy cannot certify a lower bound. Let $H^\star:=H(\mathbf p^\star)$. To show $H^\star > h$, we must rule out every composition $\mathbf p$ still consistent with the audits that could yield entropy at most $h$. Thus the lower-bound problem becomes a composite hypothesis test: we must reject the entire set of feasible compositions whose entropy is at most $h$. Let $\mathcal{P}_t = \{\mathbf p \in \Delta^{L-1} : p_s \ge O_{t,s}/N\}$, where $O_{t,s}$
counts audited records verified as $s$ after $t$ audits. Following \citet{manole2023martingale}, for each candidate composition $\mathbf p$ we form an e-process $E^L_t(\mathbf p; h)$ whose factors have conditional mean at most one at $(\mathbf{p^\star}, H^\star)$, using a single predictable stake shared by all candidates. We rule out $h$ once $\min_{\mathbf p \in \mathcal{P}_t} \log E^L_t(\mathbf p; h) \ge \log (1/\alpha_-)$, where $\alpha_-$ is the error budget of the lower end. Sharing one stake and one prediction correction across all candidates makes this objective convex in $\mathbf p$, so this minimum can be certified by convex optimization rather than enumeration (\cref{app:entropy}). Splitting the total budget as $\alpha = \alpha_+ + \alpha_-$, the resulting interval $[L_t,U_t]$ forms the desired confidence sequence. The end-to-end procedure is presented as \cref{alg:entropy} in \cref{app:algos}.

\subsection{Experimental Results: Certifying Nonlinear Scores}
\label{sec:nonlinear-results}

For the biodiversity credits application (see \cref{sec:biodiversity-setting}), we specialize $H$ to Shannon entropy $H(\mathbf{p})=-\sum_{s\in\mathcal L}p_s\log p_s$, and consider pooled Hill--Shannon diversity $P_2:=\exp H(\mathbf{p}^\star)$. We also evaluate the same linear-envelope approach on two nonlinear scores on CIFAR-10N.

We measure uncertainty using the \emph{relative margin of error}.
For an interval $[a,b]$, we define this as the interval half-width of the confidence sequence divided by the average of its two endpoints:
$\frac{(b-a)/2}{(a+b)/2}=\frac{b-a}{a+b}$. 
Thus, a 5\% target means that the confidence-sequence interval half-width is at most 5\% of its midpoint.
On the Kenyan biodiversity dataset, ATLAS-PAS reaches a 5\% relative margin-of-error target for pooled Hill--Shannon diversity $P_2$ after a median of 1,622 audits, compared with 2,550 under ATLAS-SRS (\cref{fig:p2}), requiring 36.4\% fewer audits. The advantage grows as auditing proceeds: relative to SRS, PAS produces confidence sequences that are approximately
26\% narrower after 2,000 audits and 30\% narrower after 3,000. PAS and SRS use the same 95.5\% finite-sample, anytime-valid confidence sequence and differ only in their predictable allocation of audits. Furthermore, every stored interval contained the fixed-population truth.


The CIFAR-10N experiments test a different class of nonlinear scores. We define \emph{worst-three precision} as the mean precision of the three least precise submitted classes $g_{\mathrm{prec}}(\mathbf{C}) = \frac{1}{3}\sum_{r\in\mathcal B_3^{\mathrm{prec}}(\mathbf{C})}\frac{C_{rr}}{N_r},$
where \(\mathcal B_3^{\mathrm{prec}}(\mathbf{C})\) indexes the three smallest classwise precision
values \(C_{rr}/N_r\). ATLAS-PAS reaches a 5\% relative margin-of-error target
after 3,944 audits, compared with 7,087 under ATLAS-SRS, a savings of 44\%. We similarly define \emph{worst-three taxonomic risk} as
\(g_{\mathrm{tax}}(\mathbf{C}) = \frac{1}{3}\sum_{r\in \mathcal B_3^{\mathrm{tax}}(\mathbf{C})}\tau_r(\mathbf{C})\), where
\(\tau_r(\mathbf{C})=N_r^{-1}\sum_s C_{rs}d_{rs}\) and \(d_{rs}\) is the normalized taxonomic distance (computed using WordNet 3.0) between classes \(r\) and \(s\). PAS reaches a looser relative half-width target of 25\% (as taxonomic risk relative widths contract slower) at 3,078 audits and SRS 5,483, again saving 44\%. At 10,000 audits, PAS's median intervals are 21\% narrower for worst-three precision and 19\% narrower for worst-three taxonomic risk.
These settings are discussed in greater detail in \cref{app:ns-diversity-settings} and \cref{app:cifar-tail-settings}.

\begin{figure}[t]
    \centering
    \includegraphics[height=.34\linewidth]{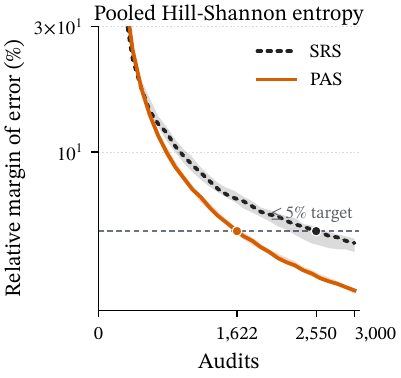}\hfill%
    \includegraphics[height=.34\linewidth]{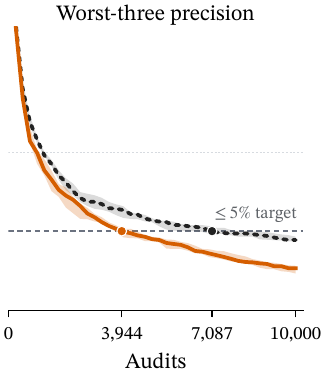}\hfill%
    \includegraphics[height=.34\linewidth]{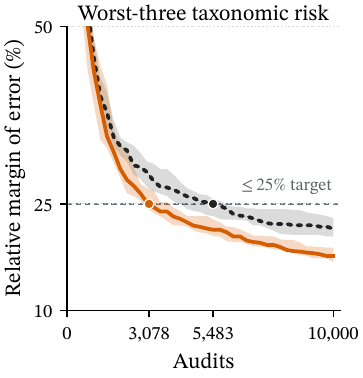}
\caption{
Anytime-valid certification of nonlinear scores under ATLAS-PAS and ATLAS-SRS. Relative margin of error is the half-width of the interval of the score estimate, expressed as a percentage of the midpoint value.
\ \textbf{(Left)} Pooled diversity $P_2$ on the Kenyan biodiversity dataset.
\ \textbf{(Middle)} Worst-three precision on CIFAR-10N.
\ \textbf{(Right)} Worst-three taxonomic risk on CIFAR-10N.
Curves show median relative confidence-sequence half-widths. Shaded regions show interquartile ranges over 20 paired audit paths for Kenya and 10 for CIFAR-10N.
}
    \label{fig:p2}
\end{figure}


%% file: sections/bayesian.tex
\section{A Bayesian, Model-Based Alternative for Efficient Auditing}
\label{sec:bayes}


\Cref{sec:linear,sec:entropy} provide distribution-free guarantees but can require an impractically large number of expert audits. We therefore consider a model-based alternative that shares information across related labels and strata when inferring the composition of unaudited records, trading robustness to model specification for greater statistical efficiency. Importantly, the same task-aware auditing framework accommodates both settings.

We use an approximate hierarchical Dirichlet--multinomial update. For a submitted label $r$ and stratum $g$, let $\bm{\theta}_r$ denote the overall correction distribution and let $\bm{\theta}_g$ denote the stratum-specific correction distribution. We use a hierarchical Dirichlet--multinomial model \citep{kim2012bayesian}: $\bm{\theta}_r\sim\operatorname{Dirichlet}(\kappa_r\{\beta \mathbf{e}_r+(1-\beta)\bar{\mathbf{g}}\}+ \mathbf{A}_{r,t})$, and $\bm{\theta}_{g}\mid \bm{\theta}_{r(g)} \sim\operatorname{Dirichlet}(\kappa_c\,\bm{\theta}_{r(g)}+ \mathbf{A}_{g,t})$,
with the remaining verified-label counts  in stratum~$g$ modeled as $\operatorname{Multinomial}(M_{g,t},\,\bm{\theta}_{g}).$
Here, $\mathbf{A}_{r,t}$ and $\mathbf{A}_{g,t}$ are vectors of verified-label counts audited so far, $M_{g,t}$ is the number of unaudited records remaining in stratum~$g$, $\mathbf{e}_r$ places prior mass on the submitted label $r$ being correct, and $\bar{\mathbf{g}}$ summarizes the smoothed share of verified labels observed across all labels. 
The concentration parameters $\kappa_r$ and $\kappa_c$ control how strongly information is shared across labels and strata. As audits accumulate within a stratum, its local counts increasingly dominate the shared prior (\cref{app:linear}).

We use the same priority-aware sampling policy as in \cref{sec:pas}. At audit~$t$, we select a stratum~$G_t$, sample one unverified record uniformly from that stratum, reveal its verified label~$V_t$, and update the observed correction counts. Beginning after audit 100 and every 25 audits thereafter, we propagate posterior uncertainty to the downstream score by drawing stratum-level correction distributions from the posterior and, conditional on each draw, imputing verified labels for all remaining unaudited records. Each posterior draw defines a complete, corrected dataset consistent with all audits observed so far. We compute $P_2$ on each of 400 datasets, yielding a posterior distribution over the nonlinear downstream score without relying on the linear-envelope construction of \cref{sec:entropy}. We report the 2.25th and 97.75th empirical quantiles as a 95.5\% credible interval and stop once the interval satisfies the prespecified 5\% relative-accuracy target. 

\Cref{tab:bayes} shows substantial empirical savings on the Kenyan dataset: the approximate Bayesian approach with PAS requires a median of only 250 audits compared with 500 under random sampling. At stopping, the credible intervals under both approximate Bayesian policies (PAS and SRS) contain the truth on 96 of 100 paths. 
These results illustrate the potential efficiency gains from modeling assumptions and motivate developing less conservative, distribution-free confidence sequences that retain anytime-valid guarantees at practical auditing budgets.

\begin{table}[t]
\caption{On the Kenya biodiversity data, the number of audits required by the approximate Bayesian procedure to estimate $P_2$ to 5\% relative accuracy at 95.5\% nominal credibility under the model. Results are over 100 audit paths per sampling strategy; all paths reached the target within 1,000 audits.
Coverage is the fraction of audit paths whose interval contains the true $P_2$ at stopping.}\label{tab:bayes}
\centering\small
\begin{tabular}{@{}lccccc@{}}
\toprule
& \multicolumn{2}{c}{\# audits to 5\% relative accuracy} & \\
\cmidrule(lr){2-3}
\textbf{Sampling strategy} & \textbf{Mean} & \textbf{Median} & \textbf{Empirical coverage at stopping} \\
\midrule
Priority-aware (Bayesian + PAS) 
& \textbf{248} & \textbf{250} & 96\% \\
Random (Bayesian + SRS) & 493 & 500 & 96\% \\
\bottomrule
\end{tabular}
\end{table}


%% file: sections/related_work.tex
\section{Related Work}
\label{sec:related}


\paragraph{Anytime-valid inference.} 
Classic finite-population concentration bounds such as Serfling's inequality provide finite-sample guarantees for sampling without replacement at a fixed sample size \citep{serfling1974probability,bardenet2015concentration}.
Confidence sequences, first introduced by \citet{darling1967confidence}, instead provide confidence sets that remain simultaneously valid over time, permitting continuous monitoring and adaptive stopping. Recent work has developed nonparametric, nonasymptotic constructions \citep{howard2021time,ramdas2023game}. For sampling without replacement from a fixed finite population, \citet{waudby2020confidence} develop prior--posterior-ratio, Hoeffding-type, and empirical-Bernstein confidence sequences. The closest works combine anytime-valid inference with adaptive, nonuniform sampling. \citet{shekhar2023risk} construct confidence sequences for a finite-population weighted mean under adaptive sampling without replacement, using importance-weighted betting martingales and fixed side information as a control variate. These methods target a mean of scalar values, and the sampling priority depends on the values themselves (their size, or a prediction of them) rather than on a downstream score. We specialize these principles to structured categorical corrections and derive the conditional variance used to motivate allocation (\cref{sec:linear}). 

Beyond linear scores, we develop anytime-valid entropy certificates for adaptive, prediction-assisted auditing of a fixed finite dataset. \citet{manole2023martingale} provide sequential inference for convex functionals, a natural framework for entropy because negative entropy is convex. Another approach projects categorical confidence regions through entropy \citep{waudby2020confidence, ryu2024gambling}. Our construction combines a linear-envelope upper bound with a certified convex-profile lower bound, incorporating predictions of unaudited labels and testing entropy levels directly. This design targets review efficiency by reducing potential conservatism from projecting a generic joint confidence region; see \citet{kaido2019confidence} for a related discussion of projection conservatism in partially identified models.

\paragraph{Active evaluation.} A complementary literature studies how to allocate expensive labels for efficient evaluation. Early active-evaluation methods choose test points to label in order to compare or estimate predictive performance \citep{sawade2012active,kossen2021active,kossen2022active} while more recent work formulates adaptive evaluation as sequential planning or active statistical inference \citep{mittal2024adaptive,zrnic2024active}. \citet{li2026robust} develop robust active-sampling rules that interpolate between uniform and uncertainty-directed acquisition when model uncertainty may be misspecified, while \citet{xu2024active} combine adaptive label acquisition with predictor-assisted, anytime-valid inference and optimize the labeling policy under a budget.
Our problem differs in that every record is already labeled: expert effort is spent not on acquiring missing labels, but rather verifying and potentially correcting that label. Our acquisition objective is not simply where a label is uncertain, but where a possible correction most affects the downstream score, including in nonlinear ways.

%% file: sections/conclusion.tex
\section{Conclusion}
\label{sec:conclusion}

Our work challenges a common assumption within machine learning communities: that the labels we treat as ground truth are reliable, even when labeling errors can materially distort the resulting  conclusions. When these labels feed into complex downstream score functions, we show that the structure of these score functions can be leveraged to guide priority-aware auditing, substantially reducing the number of expert audits needed to obtain narrow confidence sequences while retaining finite-sample, distribution-free, and anytime-valid guarantees. Our real-world case study highlights the gap between statistical validity and practical feasibility: classical tools such as Serfling's inequality provide rigorous finite-sample guarantees, but may require prohibitively many audits.

Methodologically, we represent label corrections through a confusion matrix and couple adaptive audit allocation with inverse-probability-weighted inference. \algo focuses audits on unresolved corrections that matter most to the downstream task without sacrificing validity under adaptive sampling or stopping. The efficiency and validity of our approach is relevant to many high-stakes settings, including the original motivation of auditing wildlife observations to determine payments in biodiversity credit programs. 



\paragraph{Limitations and future work.} 
\algo combines prediction, adaptive sampling, and sequential inference to provide a general template for auditing downstream scores (\cref{fig:method-overview}). We instantiate this framework for linear scores and nonlinear scores admitting suitable linear-envelope representations, establishing finite-sample, anytime-valid guarantees for both. An open direction is to develop direct sequential estimators for nonlinear scores such as entropy, with conditional unbiasedness or controlled conditional bias, that support valid e-processes and further reduce audit costs.
Our model-based Bayesian alternative (\cref{sec:bayes}) provides a useful efficiency benchmark: an open question is how closely distribution-free, anytime-valid procedures can approach this empirical efficiency, and under what conditions.

The gains from priority-aware sampling scale with increasing heterogeneity in score-relevant uncertainty across strata, and are therefore limited on highly homogeneous data. Additionally, we treat expert-verified labels as ground truth, although expert verification may itself be subject to error. 
Further extensions include heterogeneous audit costs, multiple auditors, and nonstationary confusion patterns.

%% file: appendices/top_of_appendix.tex
\section{Audits for Biodiversity Credit Programs}
\label{sec:biodiversity-setting}

To establish market credibility and prevent greenwashing, voluntary biodiversity credit schemes require standardized monitoring frameworks, robust baseline metrics, and transparent verification processes \citep{wunder2025biodiversity,gonzalez2026data}. A project site is a landowner enrolled in a biodiversity credits program, such as a cattle rancher in rural Kenya. Each project site collects data using sensors placed on their landscape, recording wildlife observations. These data points are labeled with the specific species observed, which determine the biodiversity index for the landscape. 

As a representative application, we consider the Plan Vivo ``multimetric'' \citep{planvivo2026} which includes three species-based pillars\footnote{The multimetric also includes two habitat pillars. Here, we only consider the species-specific pillars.}: species richness $P_1$ (number of unique species), species diversity $P_2$ (relative abundance of species), and taxonomic diversity $P_3$ (dissimilarity in the taxonomic tree). The multimetric determines the number of biodiversity certificates that the landowner is awarded, which then determines their payment.

For a candidate confusion matrix $\mathbf C$, let $m_s(\mathbf C):=\sum_{r\in\mathcal R} C_{rs}$ and $p_s(\mathbf C):=m_s(\mathbf C)/N$ denote the corrected abundance and population proportion of species $s$, respectively, and let $\mathbf p(\mathbf C):=(p_s(\mathbf C))_{s\in\mathcal L}$.

Species are categorized into disjoint groups $\calG=\{G_1,\dots,G_K\}$, where each group $G_k\subseteq\calL$ is a category such as birds, bats, or medium-to-large mammals. The within-group relative abundance of species $s\in G_k$ is
\begin{equation}
p^{(G_k)}_s(\mathbf{C})=\frac{m_s(\mathbf{C})}{\sum_{j\in G_k} m_j(\mathbf{C})},\qquad s\in G_k.
\label{eq:groupcomp}
\end{equation}
Let $z_s(\mathbf{C})=\ind{m_s(\mathbf{C})>0}$ indicate whether species $s$ is present in the corrected labels, $S_k(\mathbf{C})=\sum_{s\in G_k}z_s(\mathbf{C})$ denote the number of species in group $k$, and $S(\mathbf{C})=\sum_{G_k\in\calG}S_k(\mathbf{C})$ denote the total number of present species. The three pillars of the multimetric can then be expressed as:
\begin{align}
P_1(\mathbf C) &= S(\mathbf C), \label{eq:P1}\\
P_2(\mathbf C) &= \Dsm\!\bigl(\mathbf p(\mathbf C)\bigr), \qquad \Dsm(\mathbf p) = \exp\!\left(-\sum_{s\in\calL}p_s\log p_s\right),\label{eq:P2}\\
P_3(\mathbf C) &= \frac{\sum_{i<j}\omega_{ij}z_i(\mathbf C)z_j(\mathbf C)}{\sum_{i<j}z_i(\mathbf C)z_j(\mathbf C)}+\sum_{G_k\in\calG}\frac{S_k(\mathbf C)}{S(\mathbf C)}\,\Delta\big(\mathbf p^{(G_k)}(\mathbf C)\big). \label{eq:P3}
\end{align}
Here, the within-group taxonomic dissimilarity is $\Delta(\mathbf p)=\sum_{i<j}\omega_{ij}p_ip_j/\sum_{i<j}p_ip_j$, where $\omega_{ij}\in[0,\Omega]$ denotes taxonomic distance and $\Omega$ is some maximal taxonomic distance. The multimetric then aggregates the three pillar values, e.g., as a weighted sum: $M(\mathbf{C})=\Phi(P_1(\mathbf C),P_2(\mathbf C),P_3(\mathbf C))$. Notably, this scoring function is both nonlinear ($P_2$ and $P_3$) and nonsmooth ($P_1$ and $P_3$).

%% file: appendices/notation.tex
\section{Notation}
\label{app:notation}

As a convention, we use boldface for vectors (e.g., $\mathbf{p}^\star$) and matrices (e.g., $\mathbf{C}^\star$) and use ordinary type to refer to scalar quantities including individual entries (e.g., $C^\star_{rs}$, $p_s^\star$).

\begin{table}[H]
\centering
\caption{Core notation used in the main text. Symbols used exclusively in the appendix are defined where they are first introduced.}
\label{tab:notation}
\footnotesize
\setlength{\tabcolsep}{4pt}
\renewcommand{\arraystretch}{0.98}
\begin{tabular}{@{}p{0.27\textwidth}p{0.67\textwidth}@{}}
\toprule
\textbf{Symbol} & \textbf{Meaning} \\
\midrule
\addlinespace[0.5em]
\multicolumn{2}{l}{\textbf{Data and labels (\Cref{sec:setting})}} \\
\addlinespace[0.2em]
$N$
& Number of records, indexed by $i\in[N]$. \\

$\mathcal L=\{1,\ldots,L\}$, $\mathcal R\subseteq\mathcal L$
& Full label vocabulary and subset of submitted labels;
$R=|\mathcal R|$. \\

$y_i\in\mathcal R,\ v_i\in\mathcal L$
& Submitted label $y_i$ (known) and expert-verified label $v_i$ (unknown until audit) of record~$i$. \\

$\mathbf{C^\star},\ \mathbf C$
& True confusion matrix and a candidate confusion matrix. Their entries are 
$C^\star_{rs}=\#\{i:y_i=r,\ v_i=s\}$. \\

$N_r$
& Number of records submitted with label $r$:
$N_r=\sum_s C^\star_{rs}$. \\

$g=(r,a),\ r(g)$
& Audit stratum defined by submitted label $r$ and metadata $a$; 
$r(g)$ returns its submitted label. \\

\addlinespace[0.5em]
\multicolumn{2}{l}{\textbf{Audit process (\Cref{sec:linear})}} \\
\addlinespace[0.2em]
$t,\ \mathcal F_{t-1}$
& Audit index and audit history available before audit $t$. \\
$\mathcal A_{t-1}$ & Records audited before audit $t$. \\
$M_{g,t}$
& Number of unaudited records remaining in stratum $g$. \\

$G_t,\ V_t,\ \pi_{g,t}$
& Selected stratum, revealed verified label, and (predictable) sampling probability for stratum $g$ at audit $t$. \\

\addlinespace[0.5em]
\multicolumn{2}{l}{\textbf{Linear-score inference (\Cref{sec:linear})}} \\
\addlinespace[0.2em]
$\mathbf{D},\ \mathbf d_r^\top$
& Linear-score contribution matrix and its $r$th row. \\
$\theta^\star$ & True (unknown) linear score,
$\theta^\star=N^{-1}\sum_i D_{y_i,v_i}$. \\

$\mu_{g,t},\ \sigma^2_{g,t},\ b_{g,t}$
& True mean, variance, and prediction error of the remaining score contributions in stratum
$g$, with
$b_{g,t}=\mu_{g,t}-\mathbf d_{r(g)}^\top\mathbf q_{g,t}$. \\

$\mathbf q_{g,t},\ \widehat\theta_{t-1},\ Z_t$
& Predictive distribution over verified labels for the remaining records in $g$, predicted full-data score, and
conditionally unbiased audit observation. \\

$m,\ \lambda_t^\pm(m),\ E_t^\pm(m)$
& Candidate score value, predictable betting stakes, and the two
e-processes testing null hypothesis $H_0 : \theta^\star = m$. \\

$I_t,\ \alpha,\ \tau$
& Confidence set after $t$ audits (used to form confidence sequence), miscoverage level, and stopping half-width tolerance. \\

\addlinespace[0.5em]
\multicolumn{2}{l}{\textbf{\small Priority-aware sampling (\Cref{sec:pas})}} \\
\addlinespace[0.2em]
$\varepsilon,\ \widehat\sigma_{g,t}$
& SRS mixing weight, and
predicted within-stratum standard deviation. \\

\addlinespace[0.5em]
\multicolumn{2}{l}{\textbf{Nonlinear scores (\Cref{sec:entropy})}} \\
\addlinespace[0.2em]
$\mathbf{p}^\star$ & True corrected label proportions, $p_s^\star=N^{-1}\sum_r C_{rs}^\star$. \\
$H(\mathbf{p}),\ H^\star,\ P_2$
& Concave score on a label composition $\mathbf{p}$, specialized to Shannon entropy; $H^\star=H(\mathbf{p^\star})$; and pooled Hill--Shannon
diversity $P_2= \exp(H^\star)$. \\

$\bar{\mathbf{p}},\ \widehat{\mathbf{p}}_{t-1},\ \bm{\psi}_s(\bar{\mathbf{p}})$
& Reference composition, plug-in composition, and tangent coefficient; for Shannon entropy, $\psi_s(\bar{\mathbf p})=\log(1/\bar p_s)$. \\

$O_{t,s},\ \mathcal P_t,\ h$
& Number of audited records verified as $s$, feasible composition set, and
candidate entropy value. \\

$\alpha_-,\alpha_+,\ L_t,U_t$
& Lower- and upper-tail failure budgets and the resulting entropy
confidence-sequence endpoints. \\

\addlinespace[0.5em]
\multicolumn{2}{l}{\textbf{Bayesian model (\Cref{sec:bayes})}} \\
\addlinespace[0.2em]
$\bm{\theta}_r,\ \bm{\theta}_{g}$
& Row-level and stratum-level correction distributions. \\
$\mathbf A_{r,t},\ \mathbf A_{g,t}$ & 
Verified-label count
vectors for row $r$ and stratum $g$. \\
\bottomrule
\end{tabular}
\end{table}

%% file: appendices/algos_experiments.tex
\section{Algorithm Pseudocode}
\label{app:algos}

\paragraph{Reading \cref{alg:pas}.}
All predictions, priorities, sampling probabilities, and betting stakes are computed before the next verified label is observed; this predictability is what permits adaptive allocation without sacrificing validity. Verified strata are removed and the sampling distribution is renormalized over the remaining records. Prediction quality affects efficiency only: an inaccurate predictor may slow contraction but does not invalidate the confidence sequence.

\paragraph{Reading \cref{alg:entropy}.}
The two entropy endpoints require different certificates. The upper endpoint uses a predictable linear cross-entropy direction, whereas the lower endpoint
must rule out every composition still compatible with the audits through a conservative convex optimization. Both endpoints use the same audit stream, with tail budgets satisfying $\alpha_{+}+\alpha_{-}\leq\alpha$; the reported
interval is valid only when the numerical lower-end optimization returns a
certified bound rather than an unconstrained approximate solution.

\begin{algorithm}[h!]
\caption{ATLAS Priority-Aware Sampling (PAS) for a linear score}
\label{alg:pas}
\begin{algorithmic}[1]

\Require Strata $g$, contribution matrix $\mathbf D$, predictor $\Call{Predict}$,
miscoverage $\alpha$, SRS weight $\varepsilon$, target half-width $\tau$

\While{confidence-sequence half-width $>\tau$}

\State \textbf{Predict:}
For each nonempty stratum $g$, estimate
$q_{g,t}$ and
\[
\widehat\mu_{g,t}
=
\sum_s q_{g,t}(s)D_{r(g),s},
\qquad
\widehat\sigma_{g,t}^2
=
\sum_s q_{g,t}(s)
\bigl(D_{r(g),s}-\widehat\mu_{g,t}\bigr)^2 .
\]

\State \textbf{Estimate score:}
\[
\widehat\theta_{t-1}
=
\frac{1}{N}
\left[
\sum_{i\in\mathcal A_{t-1}}D_{y_i,v_i}
+
\sum_g M_{g,t}\widehat\mu_{g,t}
\right].
\]

\State \textbf{Update sampling probabilities:}
\[
\pi_{g,t}
=
\varepsilon\frac{M_{g,t}}{\sum_k M_{k,t}}
+
(1-\varepsilon)
\frac{M_{g,t}\widehat\sigma_{g,t}}
{\sum_k M_{k,t}\widehat\sigma_{k,t}} .
\]
\Comment{Mix residual-record SRS with variance-based priority.}

\State \textbf{Set stakes:}
Compute the predictable range of the next $Z_t$ over all possible
strata and verified labels; choose capped stakes
$\lambda_t^\pm(m)$ for each candidate $m$.

\State \textbf{Audit:}
Draw $G_t\sim\bm{\pi}_t$ and uniformly audit one unaudited record in $G_t$,
revealing $V_t$.

\State \textbf{Observe:}
\[
Z_t
=
\widehat\theta_{t-1}
+
\frac{M_{G_t,t}}{N\pi_{G_t,t}}
\left(
D_{r(G_t),V_t}-\widehat\mu_{G_t,t}
\right).
\]

\State \textbf{Update e-processes:}
\[
E_t^+(m)
=
E_{t-1}^+(m)
\bigl[1+\lambda_t^+(m)(Z_t-m)\bigr],
\]
\[
E_t^-(m)
=
E_{t-1}^-(m)
\bigl[1-\lambda_t^-(m)(Z_t-m)\bigr].
\]

\State \textbf{Report:}
\[
\mathcal I_t
=
\left\{
m:
E_s^\pm(m)<2/\alpha
\ \forall s\le t
\right\},
\]
and update the audited set and remaining stratum sizes.

\EndWhile

\State \Return $\mathcal I_t$

\end{algorithmic}
\end{algorithm}

\begin{algorithm}[t]
\caption{ATLAS Distribution-free audit of entropy}
\label{alg:entropy}
\begin{algorithmic}[1]

\Require Strata $g$, predictor $\Call{Predict}$,
tail budgets $\alpha_\pm$, smoothing $\eta$,
target log-width $\tau=\log(21/19)$

\While{Current endpoint width $U_{t-1}-L_{t-1}>\tau$}

\State \textbf{Predict composition:}
Estimate $\mathbf{q}_{g,t}$ for each nonempty stratum and form
\[
\mathbf{\widehat p_{t-1}} = \frac{1}{N}
\left[\mathbf O_{t-1}+\sum_g M_{g,t}\mathbf{q}_{g,t}\right],
\qquad
\mathbf O_{t-1}=\bigl(O_{t-1,s}\bigr)_{s\in\mathcal L}.\]

\State \textbf{Compute cross-entropy weights:}
\[
\mathbf r_t=(1-\eta)\mathbf{\widehat p}_{t-1}+\frac{\eta}{L}\mathbf{1},
\qquad
\bm{\psi}_t=-\log \mathbf r_t .
\]

\State \textbf{Estimate stratum influence:}
\[
\widehat{\mu}_{g,t}=\mathbf{q}_{g,t}^{\top}\bm{\psi}_t,
\qquad
\widehat\sigma_{g,t}^2
=
\sum_s {q}_{g,t}(s)(\psi_{t,s}-\hat{\mu}_{g,t})^2 .
\]

\State \textbf{Update sampling probabilities:}
\[\bm{\pi}_t=0.4\,\bm{\pi}_t^{\rm prop}+0.3\,\bm{\pi}_t^{\rm floor}+0.3\,\bm{\pi}_t^{\rm pri},\]
where\[\pi_{g,t}^{\rm prop}\propto M_{g,t},
\qquad\pi_{g,t}^{\rm floor}\propto M_{g,t}(1+\hat{\mu}_{g,t}),
\qquad\pi_{g,t}^{\rm pri}\propto M_{g,t}\widehat\sigma_{g,t}.\]

\State \textbf{Set stakes:}
Choose predictable upper/lower stakes from the range of $\bm{\psi}_t$,
the possible Horvitz--Thompson weights, and current endpoints
$L_{t-1},U_{t-1}$.

\State \textbf{Audit:}
Draw $G_t\sim\bm{\pi}_t$ and uniformly audit one unaudited record in $G_t$,
revealing $V_t$.

\State \textbf{Upper endpoint:}
Form the prediction-corrected directional observation $Z_t^U$
using contributions $\bm{\psi}_t$, update
\[
E_t^U(h)
=
E_{t-1}^U(h)
\bigl[1-\lambda_t^U(Z_t^U-h)\bigr],
\]
and set $U_t$ to the smallest $h$ ruled out by the upper e-process.

\State \textbf{Lower endpoint:} For each candidate composition $\mathbf p$, form
$Z_t(\mathbf p)$ using the smoothed composition
$\mathbf{\widetilde p}=(N \mathbf p+\mathbf 1)/(N+L)$ and update
\[
F_t(\mathbf p;h)
=
F_{t-1}(\mathbf p;h)
+
\log\!\left[
1+\lambda_t^L\{Z_t(\mathbf p)-h-c\}
\right],
\qquad
c=\log(1+L/N).
\]

\State \textbf{Profile over feasible compositions:}
For each $h$, compute a certified lower bound on
\[
\min_{p\in\mathcal P_t}F_t(\mathbf p;h)
\]
by projected gradient plus a supporting-plane bound;
use bisection in $h$ to obtain $L_t$.

\State \textbf{Report:}
\[
P_2\in[e^{L_t},e^{U_t}],
\qquad
\widehat P_{2,t}
=
\frac{2e^{L_t+U_t}}{e^{L_t}+e^{U_t}},
\]
then update counts and remaining records.

\EndWhile

\State \Return $[e^{L_t},e^{U_t}]$ and $\widehat P_{2,t}$

\end{algorithmic}
\end{algorithm}

\clearpage
\section{Experiment Settings and Additional Results}
All error-rate experiments use nominal confidence \(95.5\%\)
(\(\alpha=0.045\)). PAS and SRS use the same prediction-corrected betting
confidence sequence and differ only in their predictable sampling
probabilities. Records are sampled uniformly within the selected stratum.
Paired policies use identical randomized within-stratum record orders.

\begin{table}[h]
\centering
\caption{Frozen settings for the label-error-rate experiments. The prior column
gives the prior probability that a submitted label is correct. The SRS weight
is \(\varepsilon\) in Equation~\eqref{eq:pas-allocation}.}
\label{tab:linear-settings}
\small
\setlength{\tabcolsep}{4pt}
\begin{tabular}{@{}lcc@{}}
\toprule
\textbf{Setting} & \textbf{Kenya Dataset} & \textbf{CIFAR-10N} \\
\midrule
Population size & \(33{,}403\) & \(50{,}000\) \\
Audit strata & 46 & 40 \\
Audit paths per policy & 20 & 20 \\
Maximum audits & \(5{,}000\) & \(5{,}000\) \\
Stored interval spacing & 25 & 25 \\
Target half-width & \(0.02\) & \(0.02\) \\
SRS weight \(\varepsilon\) & \(0.30\) & \(0.10\) \\
Submitted-label prior & \(0.80\) & development-calibrated \\
Prior/shrinkage strength & \(2\) and \(1\) & \(50\) \\
Stake effective horizon & \(800\) & \(1{,}200\) \\
Stake cap fraction & \(0.90\) & \(0.90\) \\
Candidate-grid spacing & \(5\times10^{-4}\) & \(5\times10^{-4}\) \\
\bottomrule
\end{tabular}
\end{table}

\subsection{Kenya Biodiversity Setting}
\subsubsection{Linear Score: Mislabel Rate}
\label{app:ns-linear-settings}

Submitted labels define the 46 audit strata. The predictable error model starts
from correctness probability \(0.8\). Observed row error rates are shrunk
toward the corresponding error probability \(0.2\) with strength \(2\),
followed by a second shrinkage stage of strength \(1\). PAS mixes \(30\%\)
remaining-record SRS with \(70\%\) predicted Neyman allocation. We use 20
paired paths with seeds \(0,\ldots,19\).

\subsubsection{Nonlinear Score: Pillar 2 Diversity}
\label{app:ns-diversity-settings}

As in the main text, let
\[
m_s(\mathbf C)=\sum_r C_{rs},
\qquad
p_s(\mathbf C)=\frac{m_s(\mathbf C)}{N}
\]
denote the corrected abundance and population proportion of species \(s\)
induced by a candidate confusion matrix \(\mathbf C\). Pillar 2 is the Hill--Shannon
diversity of this corrected composition:
\[
H\!\left(\mathbf{p(C)}\right)
=
-\sum_{s\in\mathcal L}p_s(\mathbf C)\log p_s(\mathbf C),
\qquad
P_2(\mathbf C)
=
\exp\!\left\{H\!\left(\mathbf{p(C)}\right)\right\}.
\]
Thus, \(P_2\) is the effective number of species implied by the corrected
labels: it increases when the composition contains more species or distributes
mass more evenly among them. Label corrections affect \(P_2\) through the
column totals of \(\mathbf C\), and its nonlinearity makes uncertainty in rare or
poorly audited species especially consequential.

The distribution-free \(P_2\) experiment uses 300 nonempty strata obtained by
crossing submitted label with camera station. The prediction hierarchy shrinks
station-level distributions toward their submitted-label row with strength
\(1\), and row distributions toward a pooled composition with strength \(2\).
The row prior places probability \(0.8\) on the submitted label. The
predictable reference composition used in the entropy direction is
\[
\mathbf r_t
=
0.99\,\mathbf {\widehat p}_{t-1}
+
\frac{0.01}{L}\mathbf 1.
\]

PAS uses the frozen mixture
\[
\bm{\pi}^{\mathrm{PAS}}_t
=
0.4\,\bm{\pi}^{\mathrm{prop}}_t
+
0.3\,\bm{\pi}^{\mathrm{range}}_t
+
0.3\,\bm{\pi}^{\mathrm{Neyman}}_t.
\]
The proportional component samples according to remaining stratum size, the
range component controls the possible range of the importance-weighted
observation, and the Neyman component uses predicted within-stratum variation
in the current entropy direction.




\begin{table}[h!]
\centering
\caption{Frozen settings for the Kenyan biodiversity data distribution-free diversity
experiment.}
\label{tab:ns-diversity-settings}
\small
\setlength{\tabcolsep}{5pt}
\begin{tabular}{@{}lr@{}}
\toprule
\textbf{Setting} & \textbf{Value} \\
\midrule
Audit strata & 300 \\
Paired paths & 20 \\
Seeds & \(95001,\ldots,95020\) \\
Maximum audits & \(3{,}000\) \\
Endpoint update spacing & 25 \\
Reported checkpoint spacing & 100 \\
Lower-tail error budget & \(0.0225\) \\
Upper-tail error budget & \(0.0225\) \\
Proportional/range/Neyman weights & \(0.40/0.30/0.30\) \\
Reference smoothing & \(0.01\) \\
Row/stratum shrinkage & \(2/1\) \\
Stake cap fraction & \(0.90\) \\
Entropy resolution & \(3\times10^{-4}\) \\
Profile iteration limit & 60 \\
Target entropy width & \(\log(21/19)\) \\
\bottomrule
\end{tabular}
\end{table}

The entropy interval is transformed outward to the \(P_2\) scale. The stopping criterion \((e^{U_t}-e^{L_t})/(e^{U_t}+e^{L_t})\leq 0.05\)
requires the confidence-sequence half-width to be at most \(5\%\) of its midpoint. SRS and
PAS use the same lower convex-profile certificate, upper affine certificate,
prediction hierarchy, and stake rules.

\subsubsection{Bayesian Settings}
\label{app:bayes-settings}

We evaluate the model-based procedure of \cref{sec:bayes} on the Kenyan
biodiversity dataset. PAS and SRS use the same hierarchical
Dirichlet--multinomial model, posterior computation, credible interval, and
stopping rule; they differ only in audit allocation. Records are sampled
uniformly within the selected stratum, and paired PAS and SRS paths use the
same randomized within-stratum record orders.

The model hyperparameters are fixed at
$\beta=0.5$, $\kappa_r=2$, and $\kappa_c=0.5$, with add-one smoothing of the
pooled verified-label composition. Posterior intervals are evaluated every
25 audits beginning at audit 100. Each evaluation uses 400 posterior draws
and reports the 2.25th and 97.75th percentiles of $P_2$.

PAS uses the frozen allocation
\[
\bm{\pi}^{\mathrm{PAS}}_t
=
0.2\,\bm{\pi}^{\mathrm{prop}}_t
+
0.8\,\bm{\pi}^{\mathrm{Neyman}}_t,
\qquad
\bm{\pi}^{\mathrm{Neyman}}_{g,t}
\propto
M_{g,t}\hat{\sigma}_{g,t},
\]
where $\hat{\sigma}_{g,t}$ is the predicted within-stratum standard
deviation of the entropy influence direction. Its variance includes the
light-audit adjustment
$1+1/(n_{g,t}+\kappa_c)$, where $n_{g,t}$ denotes the number of records audited in stratum $g$ before audit $t$. Consequently, the adjustment inflates predicted uncertainty only in
lightly audited strata. SRS samples uniformly from all unaudited records.
Neither policy uses a pilot phase.

\begin{table}[h!]
\centering
\caption{Frozen settings for the Bayesian diversity experiment.}
\label{tab:bayes-settings}
\footnotesize
\setlength{\tabcolsep}{5pt}
\begin{tabular}{@{}lr@{}}
\toprule
\textbf{Setting} & \textbf{Value} \\
\midrule
Audit strata & 300 \\
Audit paths per policy & 100 \\
Seeds & $8000,\ldots,8099$ \\
Maximum audits & 1,000 \\
Evaluation schedule & $100,125,\ldots,1000$ \\
Posterior draws per evaluation & 400 \\
Credible-interval percentiles & 2.25 / 97.75 \\
Prior weight $\beta$ & 0.5 \\
Concentrations $\kappa_r/\kappa_c$ & $2/0.5$ \\
Pooled-composition smoothing & Add one \\
PAS proportional/Neyman weights & $0.20/0.80$ \\
Pilot audits & 0 \\
Stopping threshold & $U_t-L_t\leq\log(21/19)$ \\
\bottomrule
\end{tabular}
\end{table}

\subsection{CIFAR-10N Setting}
\subsubsection{Linear Score: Mislabel Rate}
\label{app:cifar-linear-settings}

The submitted label is \texttt{random\_label1}, and
\texttt{clean\_label} is the designated audit reference. The two additional
human annotations, \texttt{random\_label2} and
\texttt{random\_label3}, are used as pre-audit covariates. Each record belongs
to one of 40 strata formed by crossing its submitted class with four auxiliary
agreement patterns: both auxiliary labels support the submission, exactly one
supports it, both agree on a different label, or all three human annotations
differ.

Stratum error priors were estimated on a development view using
\texttt{random\_label2} as the submitted label and the other two human rounds
as auxiliary annotations. Six candidate policies were evaluated on five
development paths through 1,500 audits. The frozen selection rule minimized
median terminal half-width, breaking ties by mean half-width and then by
configuration identifier. The selected policy uses a prior strength of 50,
a stake effective horizon of \(1{,}200\), and a 10\% SRS component. Confirmation
uses 20 fresh paired paths with seeds
\(2026101200,\ldots,2026101219\).

\begin{figure}[h]
\label{app:cifar-mislabel-figure}
    \centering
    \includegraphics[width=\linewidth]
        {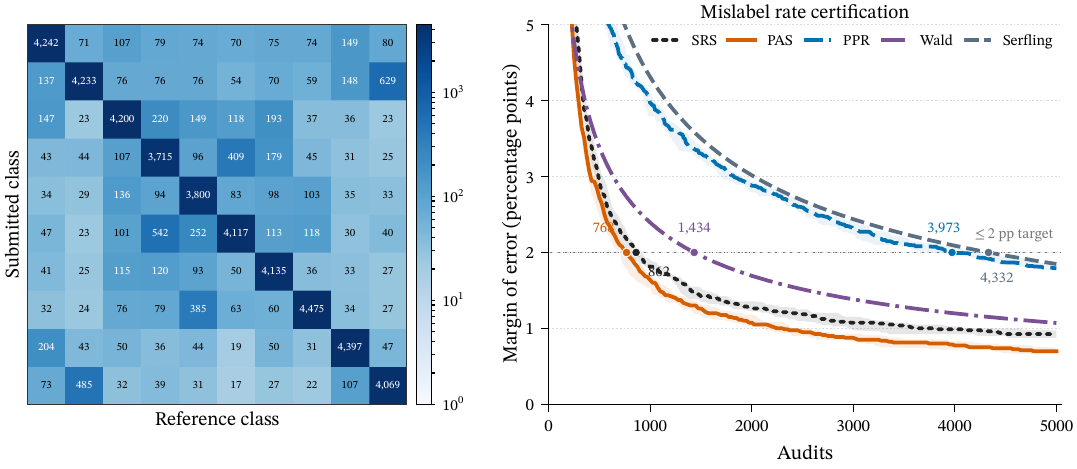}
    \caption{
Audits on CIFAR-10N.
\textbf{(Left)} Submitted-to-reference confusion matrix. Diagonal entries indicate agreement with the reference labels, while off-diagonal entries indicate label disagreements. Cell annotations give record counts, and color dsplays them on a log scale.
\textbf{(Right)} Margin of error (half-width of the 95.5\% mislabel-rate confidence sequence) as the number of audits increases. Methods and plotting conventions match \cref{fig:kenyan-mislabel-rate}; dots mark the audits required to reach a half-width of two percentage points. The two settings exhibit similar contraction patterns and relative method performance. Numerical stopping times for both are reported in \cref{tab:errrate}.
}
    \label{fig:cifar-mislabel-rate}
\end{figure}

\subsubsection{Nonlinear Score: Worst-Three Scores}
\label{app:cifar-tail-settings}

For a confusion matrix \(\mathbf C\), let
\(N_r=\sum_{s=1}^{10}C_{rs}\) denote the number of records submitted as
class \(r\). The classwise precision is \(p_r(\mathbf C)=C_{rr}/N_r\). We define
worst-three precision as the mean precision of the three least precise
submitted classes:
\[
g_{\mathrm{prec}}(\mathbf C)
=
\min_{\substack{S\subseteq[10]\\ |S|=3}}
\frac{1}{3}\sum_{r\in S}\frac{C_{rr}}{N_r}.
\]

To define taxonomic risk, we map each CIFAR-10 class to a fixed noun synset
in Princeton WordNet 3.0. For each pair of classes
\(r,s\), we compute the unweighted shortest-path distance between their
synsets in the undirected WordNet hypernym graph and normalize it by the
largest distance among the ten classes. This gives a fixed semantic-distance
matrix \( \mathbf{D^{\text{tax}}}=(d_{rs})\), with \(d_{rr}=0\) and \(d_{rs}\in[0,1]\).

The classwise taxonomic risk is the average semantic severity of corrections
from submitted class \(r\),
\[
\tau_r(\mathbf C)
=
\frac{1}{N_r}\sum_{s=1}^{10} C_{rs}d_{rs}.
\]
We define worst-three taxonomic risk as the mean risk of the three
highest-risk submitted classes:
\[
g_{\mathrm{tax}}(\mathbf C)
=
\max_{\substack{S\subseteq[10]\\ |S|=3}}
\frac{1}{3}\sum_{r\in S}\tau_r(\mathbf C)
=
\max_{\substack{S\subseteq[10]\\ |S|=3}}
\frac{1}{3}\sum_{r\in S}
\frac{1}{N_r}\sum_{s=1}^{10}C_{rs}d_{rs}.
\]

Both scores are therefore the minimum or maximum of the
\(\binom{10}{3}=120\) fixed three-row averages. The candidate-score grid has
spacing \(0.001\), the total miscoverage level is \(\alpha=0.045\), the stake
effective horizon is \(800\), and the stake cap fraction is \(0.9\).
Predictions use a submitted-label prior of \(0.8\), row shrinkage strength
\(2\), and secondary shrinkage strength \(1\).

PAS uses
\[
\bm{\pi}^{\mathrm{PAS}}_t
=
0.70\,\bm{\pi}^{\mathrm{prop}}_t
+
0.15\,\bm{\pi}^{\mathrm{range}}_t
+
0.15\,\bm{\pi}^{\mathrm{Neyman}}_t.
\]
Row relevance gives equal weight to plug-in worst-three membership and to the
three-row subsets currently limiting the profiled endpoint. The latter
weights are smoothed with softmax temperature \(1\), and ties in estimated
tail membership are divided fractionally. We require
\((U_t^{\mathrm{prec}}-L_t^{\mathrm{prec}})/(U_t^{\mathrm{prec}}+L_t^{\mathrm{prec}})\leq 0.05\) for precision and\((U_t^{\mathrm{tax}}-L_t^{\mathrm{tax}})/(U_t^{\mathrm{tax}}+L_t^{\mathrm{tax}})\leq 0.25\) for taxonomic risk, corresponding to relative half-widths of at most \(5\%\) and \(25\%\), respectively.



\begin{table}[h]
\centering
\caption{Frozen settings for the CIFAR-10N worst-three experiments.}
\label{tab:cifar-tail-settings}
\small
\setlength{\tabcolsep}{5pt}
\begin{tabular}{@{}lr@{}}
\toprule
\textbf{Setting} & \textbf{Value} \\
\midrule
Submitted rows & 10 \\
Tail size & 3 \\
Fixed three-row subsets & 120 \\
Scores & worst-three precision; worst-three taxonomic risk \\
Paired confirmation paths & 10 \\
Confirmation seeds & \(2026100300,\ldots,2026100309\) \\
Maximum audits & \(10{,}000\) \\
Stored checkpoints & 50, then every 250 \\
Confidence level & \(95.5\%\) \\
Candidate-grid spacing & \(0.001\) \\
Stake effective horizon & 800 \\
Stake cap fraction & \(0.90\) \\
Proportional/range/Neyman weights & \(0.70/0.15/0.15\) \\
Active-tail/profile relevance weights & \(0.50/0.50\) \\
Profile temperature & \(1.0\) \\
\bottomrule
\end{tabular}
\end{table}


\subsection{Synthetic Data: A Priority Mechanism Study}
\label{app:synthetic-experiment-settings}

The synthetic experiment isolates the acquisition advantage produced by
heterogeneous score relevance. Its confirmation population contains 20
equal-sized submitted-label rows with 250 records per row, for \(N=5{,}000\).
Four rows have equal numbers of records with score contributions 0 and 1.
The other 16 rows each have two possible verified destinations, both with
score contribution \(0.25\). Their verified labels remain uncertain, but that
uncertainty does not affect the chosen score.

Let $D^{\mathrm{syn}}_{rs}$ denote the fixed score contribution associated with submitted row $r$ and verified destination $s$. The target is the population-wide linear risk
\[
g(\mathbf C)=\frac{1}{N}\sum_{r,s}D^{\mathrm{syn}}_{rs}C_{rs}.
\]
whose value in the confirmation population is \(0.30\). PAS uses the
remaining-size-aware Neyman rule
\[
\bm{\pi}^{\mathrm{PAS}}_{r,t}
\propto
M_{r,t}\widehat{\sigma}_{r,t}
\]
without an SRS floor. Constant-contribution rows have zero predicted score variance and therefore receive no PAS audits. SRS remains uniform over all unaudited records. Both policies use the same prediction-assisted linear betting confidence sequence.

\begin{table}[h!]
\centering
\caption{Frozen synthetic-experiment settings.}
\label{tab:synthetic-settings}
\small
\setlength{\tabcolsep}{5pt}
\begin{tabular}{@{}lr@{}}
\toprule
\textbf{Setting} & \textbf{Value} \\
\midrule
Confirmation rows & 20 \\
Score-variable rows & 4 \\
Records per row & 250 \\
Population size & \(5{,}000\) \\
Confirmation paths & 100 \\
Confirmation seeds & \(2026101000,\ldots,2026101099\) \\
Maximum audits & 800 \\
Checkpoint spacing & 25 \\
Confidence level & \(95.5\%\) \\
Prior strength & 2 \\
Stake effective horizon & 800 \\
Stake cap fraction & \(0.90\) \\
Candidate-grid spacing & \(5\times10^{-4}\) \\
SRS component in PAS & 0 \\
\bottomrule
\end{tabular}
\end{table}

The scaling screen fixes four score-variable rows and varies the total number
of rows over
\[
R\in\{8,12,20,32,48\}.
\]
Each configuration uses five development seeds and an 800-audit budget. This screen tests the predicted increase in PAS advantage as the number of score-invariant nuisance rows grows. Confirmation fixes \(R=20\) and uses the 100 fresh paths reported above.

\begin{figure}[h]
    \centering
    \includegraphics[width=\linewidth]
    {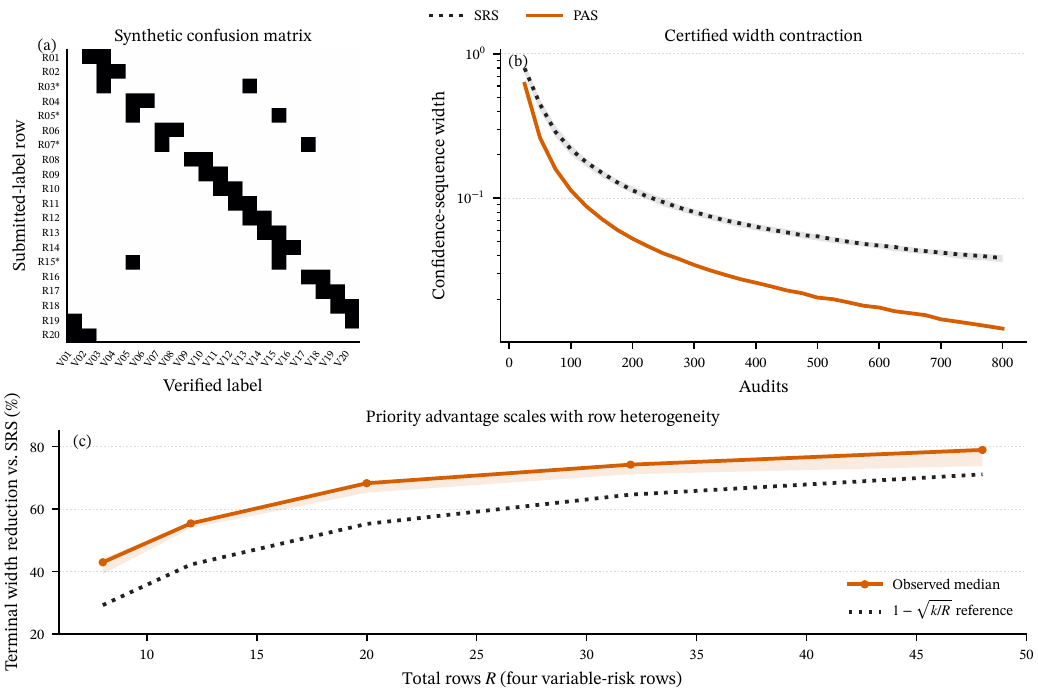}
    \caption{
Controlled synthetic experiment isolating the benefit of priority-aware sampling.
\textbf{(a)}\ Submitted-to-verified confusion matrix; asterisks identify the four rows whose possible verified labels have different score contributions.
\textbf{(b)}\ Median confidence-sequence width under PAS and SRS, with interquartile ranges.
\textbf{(c)}\ Terminal PAS width reduction as the total number of rows \(R\) increases while the number of score-variable rows remains fixed at \(k=4\). The dotted curve shows the idealized \(1-\sqrt{k/R}\) reference: SRS allocates only a fraction \(k/R\) of its audits to the score-variable rows, whereas PAS concentrates on them, and confidence-sequence width scales approximately with the inverse square root of the informative audit count.
}
    \label{fig:synthetic-priority-upside}
\end{figure}

\subsection{Random-Sampling Comparators}
\label{app:linear-comparators}

The default prior--posterior-ratio confidence sequence is evaluated on the same
20 SRS audit orders used by our betting confidence sequence. It uses a uniform
\(\mathrm{BetaBinomial}(N,1,1)\) working prior on the population error count
and reports the running intersection of its candidate sets.

The Wald reference uses the true population error variance, no
finite-population correction, and nominal confidence \(95.5\%\). It is an
oracle-variance, fixed-sample asymptotic reference. The Serfling reference is
the corresponding finite-population, fixed-sample bound. Wald and Serfling
therefore provide benchmark sample sizes rather than anytime-valid stopping
rules.

\section{Computational Environment}
\label{app:computational-environment}

All experiments were executed on a local machine with 15 CPU cores and 24\,GB of unified memory. Computation was
CPU-only; no GPU or external computing cluster was used. Independent
score-policy-seed tasks were distributed across four local worker processes.

The final runs used macOS 26.5 and Python 3.14.7. Principal numerical and
reporting dependencies were NumPy 2.5.2, SciPy 1.18.1, pandas 3.0.5, and
Matplotlib 3.11.1. Each worker wrote an independent result shard, and
aggregation began only after all expected shards for that experiment had
completed.

%% file: appendices/confidence_sequences_appendices.tex
\section{Linear scores}
\label{app:linear}

\begin{lemma}[Ville's inequality \cite{ville1939etude}]
\label{lem:ville}
Let $(W_t)_{t\geq0}$ be a nonnegative supermartingale with respect to $(\mathcal F_t)$ with
$W_0=1$. Then $\Pr\{\exists t:W_t\geq c\}\leq1/c$ for every $c>0$.
\end{lemma}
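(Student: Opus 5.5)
We prove Ville's inequality via the standard optional-stopping argument: first for a bounded horizon, then letting the horizon grow. Fix $c>0$; if $c\le 1$ the bound is trivial, so assume $c>1$. Define the stopping time $\tau=\inf\{t\ge 0: W_t\ge c\}$, with $\inf\emptyset=\infty$, which is a stopping time with respect to $(\mathcal F_t)$ because the event $\{W_t\ge c\}$ is $\mathcal F_t$-measurable. For a fixed horizon $T\in\mathbb N$, the stopped process $W_{t\wedge\tau}$ is again a nonnegative supermartingale (the stopped process of a supermartingale is a supermartingale: the increment $W_{(t+1)\wedge\tau}-W_{t\wedge\tau}=\ind{\tau>t}(W_{t+1}-W_t)$ has nonpositive conditional expectation, since $\ind{\tau>t}$ is $\mathcal F_t$-measurable and nonnegative). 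Hence $\E[W_{T\wedge\tau}]\le \E[W_0]=1$.

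Next we bound the probability of crossing by time $T$. On the event $\{\tau\le T\}$ we have $W_{T\wedge\tau}=W_\tau\ge c$, and on the complement $W_{T\wedge\tau}\ge 0$ by nonnegativity. Therefore
\[
c\,\Pr\{\tau\le T\}\le \E\bigl[W_{T\wedge\tau}\ind{\tau\le T}\bigr]\le \E[W_{T\wedge\tau}]\le 1,
\]
so $\Pr\{\exists t\le T: W_t\ge c\}=\Pr\{\tau\le T\}\le 1/c$ for every $T$.

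Finally, the events $\{\tau\le T\}$ increase in $T$ with union $\{\tau<\infty\}=\{\exists t: W_t\ge c\}$. Continuity of probability from below gives $\Pr\{\exists t: W_t\ge c\}=\lim_{T\to\infty}\Pr\{\tau\le T\}\le 1/c$. The only step needing care is the passage to infinite time. We deliberately avoid applying optional stopping at the possibly infinite $\tau$ (which would require a limit $W_\infty$ and integrability conditions) and instead work at bounded times $T\wedge\tau$, where the supermartingale property applies directly, then use monotone convergence of events. This yields the statement for discrete time, which is the setting used throughout (audits are indexed by $t\in\mathbb Z_{\ge0}$).
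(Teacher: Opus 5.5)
Your proof is correct and complete. The paper gives no proof of this lemma; it states it with a citation to Ville and uses it as a black box. Your write-up therefore serves as a self-contained substitute for that citation rather than an alternative to an argument in the paper.

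The route you take is the standard discrete-time one, namely Doob's maximal inequality for nonnegative supermartingales followed by a limit. You stop at the first crossing time $\tau$ and use that $(W_{t\wedge\tau})$ is again a nonnegative supermartingale, which gives $c\Pr\{\tau\le T\}\le\E[W_{T\wedge\tau}]\le 1$. You then let $T\to\infty$ by continuity from below. Working only at the bounded times $T\wedge\tau$ correctly avoids any need for a limit $W_\infty$ or for optional stopping at an unbounded time.

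Two small remarks:
\begin{itemize}
\item The reduction to $c>1$ is unnecessary. Your chain of inequalities holds verbatim for every $c>0$: when $c\le 1$ you get $\tau=0$, and the bound reads $c\le 1$.
\item In the paper's applications, audits sample without replacement from $N$ records, so each e-process has at most $N$ factors. Your bounded-horizon step with $T=N$ already gives everything the paper uses. The passage $T\to\infty$ is needed only for the lemma in the generality stated.
\end{itemize}
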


\paragraph{Predictions.}
Any predictable $\mathbf{q}_{g,t}$ keeps Lemma~\ref{lem:unbiased} valid; good predictions only
reduce the variance \eqref{eq:varZ}. We use a two-level shrinkage estimator, built
only from the audits before audit $t$. Let $\mathbf{A}_{r,t}(s)$ and $A_{g,t}(s)$ be the numbers
of audited records in row $r$ and in stratum $g$ that were verified as $s$, and
$n_{r,t}=\sum_s\mathbf{A}_{r,t}(s)$, $n_{g,t}=\sum_s\mathbf{A}_{g,t}(s)$ the numbers audited. Then
\begin{align}
  \bar g_t(s)&=\frac{1/L+\sum_r\mathbf{A}_{r,t}(s)}{1+(t-1)}
    &&\text{(pooled verified labels)}\nonumber\\
  \mathbf{q}_{r,t}&=\frac{\mathbf{A}_{r,t}+\kappa_r\{\beta \mathbf e_r+(1-\beta)\bar g_t\}}{n_{r,t}+\kappa_r}
    &&\text{(submitted label $r$)}\label{eq:pred-row}\\
  \mathbf{q}_{g,t}&=\frac{\mathbf{A}_{g,t}+\kappa_c\,\mathbf{q}_{r(g),t}}{n_{g,t}+\kappa_c}
    &&\text{(stratum $g$ within row $r(g)$)},\label{eq:pred-stratum}
\end{align}
where $\mathbf e_r$ is the unit vector at $r$. Before any audit in row $r$, the prediction puts
weight $\beta$ on ``the submitted label is correct'' and spreads the rest over the labels
verified so far. As audits accumulate, each row moves toward its own verified-label
frequencies, and each stratum toward its own frequencies within its row. Each level is
the posterior mean of a Dirichlet--multinomial model whose prior is centered on the level
above. This is the same structure as the Bayesian model of \cref{sec:bayes}, but here it
is used only for prediction, never for inference. On the Kenya data we use $\beta=0.8$,
$\kappa_r=2$ and $\kappa_c=1$, fixed before the audit and not tuned.

\subsection{Proofs for \cref{sec:linear}}
\label{app:linear-proofs}

\begin{proof}[Proof of \cref{lem:unbiased}]
\emph{Unbiasedness.} Given $\mathcal F_{t-1}$, $G_t=g$, and \(J_t\in[N]\) as the record selected for expert audit at time \(t\), $J_t$ is uniform on $g\setminus\mathcal A_{t-1}$, so
$\E[D_{y_{J_t}V_t}\mid\mathcal F_{t-1},G_t=g]=\mu_{g,t}$. Averaging over $G_t$,
\begin{align*}
  \E\Bigl[\frac{M_{G_t,t}}{N\pi_{G_t,t}}\bigl(D_{y_{J_t}V_t}-\mathbf{q}_{G_t,t}^\top D_{r(G_t)\cdot}\bigr)\Bigm|\mathcal F_{t-1}\Bigr]
  &=\sum_{g:M_{g,t}>0}\pi_{g,t}\frac{M_{g,t}}{N\pi_{g,t}}\bigl(\mu_{g,t}-\mathbf{q}_{g,t}^\top \mathbf d_{r(g)}\bigr)\\
  &=\frac1N\sum_gM_{g,t}\bigl(\mu_{g,t}-\mathbf{q}_{g,t}^\top \mathbf d_{r(g)}\bigr).
\end{align*}
The condition $\pi_{g,t}>0$ whenever $M_{g,t}>0$ lets the probabilities cancel.
Adding $\hat\theta_{t-1}$, the predicted terms cancel and \eqref{eq:split} gives
$\theta^\star$. \emph{Variance.} Only the correction is random. Its conditional
second moment is
$\sum_g\pi_{g,t}(M_{g,t}/N\pi_{g,t})^2\E[(D_{y_{J_t}V_t}-\mathbf q^\top_{g,t}\mathbf d_{r(g)})^2\mid G_t=g]
=N^{-2}\sum_g(M^2_{g,t}/\pi_{g,t})(\sigma^2_{g,t}+b^2_{g,t})$. Subtracting the square
of its mean, $N^{-1}\sum_gM_{g,t}b_{g,t}$, gives \eqref{eq:varZ}.
\end{proof}

\begin{proof}[Proof of \cref{prop:neyman}]
With exact predictions $b_{g,t}=0$ and \eqref{eq:varZ} is $N^{-2}\sum_g\gamma_g^2/\pi_g$
with $\gamma_g=M_{g,t}\sigma_{g,t}$. This is convex on
$\{\pi\geq\ell,\sum_g\pi_g=1\}$, and the KKT conditions give
$\pi_g=\max\{\ell_g,\gamma_g/\nu\}$. Without floors, Cauchy--Schwarz gives
$\sum_g\gamma_g^2/\pi_g\geq(\sum_g\gamma_g)^2$ with equality when $\pi_g\propto\gamma_g$. For refinement, split $g$ into $g_1,\dots,g_J$ with shares $\rho_j$; by the law of total
variance and Jensen, $\sigma^2_g\geq\sum_j\rho_j\sigma^2_{g_j}\geq(\sum_j\rho_j\sigma_{g_j})^2$,
so $M_g\sigma_g\geq\sum_jM_{g_j}\sigma_{g_j}$.
\end{proof}

\paragraph{The range of one observation.}
The stake caps need predictable bounds valid for every stratum that could be drawn and every unobserved label:
\[
  \underline{Z}_t=\hat\theta_{t-1}+\min_{g:M_{g,t}>0}\frac{M_{g,t}}{N\pi_{g,t}}\Bigl(\min_vD_{r(g)v}-\mathbf q^\top_{g,t}D_{r(g)\cdot}\Bigr),
\]
\[
  \overline{Z}_t=\hat\theta_{t-1}+\max_{g:M_{g,t}>0}\frac{M_{g,t}}{N\pi_{g,t}}\Bigl(\max_vD_{r(g)v}-\mathbf q^\top_{g,t}D_{r(g)\cdot}\Bigr).
\]
For the error rate, $\min_vD_{rv}=0$ and $\max_vD_{rv}=1$ in every row.

\begin{proof}[Proof of \cref{thm:linear-cs}]
Consider $E^+_t$ at $m=\theta^\star$. If $\theta^\star>\underline{Z}_t$ the stake is at most
$0.9/(\theta^\star-\underline{Z}_t)$, so $1+\lambda^+_t(Z_t-\theta^\star)\geq1-0.9=0.1$; if
$\theta^\star\leq\underline{Z}_t$ the factor is at least one. By \cref{lem:unbiased} its
conditional mean is one. So $E^+_t(\theta^\star)$ is a nonnegative martingale with
$E^+_0=1$, and \cref{lem:ville} gives $\Pr\{\exists t:E^+_t(\theta^\star)\geq2/\alpha\}\leq\alpha/2$.
The same holds for $E^-_t(\theta^\star)$ with the cap $0.9/(\overline{Z}_t-\theta^\star)$. The
sequence misses $\theta^\star$ only if one of the two reaches $2/\alpha$.
\end{proof}

\paragraph{Evaluating the sequence on a grid.}
In the experiment, candidates are evaluated on the grid $m_j=j/2000$, $j=0,\dots,2000$. On each
interval $[m_j,m_{j+1}]$ each side's stake is held constant at the more constrained
endpoint: $\lambda^+$ capped at $0.9/(m_{j+1}-\underline{Z}_t)$ and 
$\lambda^-$ at
$0.9/(\overline{Z}_t-m_j)$. Then $1+\lambda^+(Z_t-m)$ decreases and $1-\lambda^-(Z_t-m)$
increases in $m$ on the interval, so if $\theta^\star\in[m_j,m_{j+1}]$ is not
rejected, neither are $m_{j+1}$ by the upward test nor $m_j$ by the downward test.
Retaining every grid interval whose right end survives the upward test and whose
left end survives the downward test therefore preserves coverage exactly, and
differs from the hull of surviving grid points by at most one grid step (only 0.05
percentage points).

\paragraph{Choosing the stakes.}
Within the caps, the stake decides how fast evidence accumulates. Suppose the
observations have mean $m+\delta$ and variance $v$. Expanding
$\log(1+x)\approx x-x^2/2$, each review adds about $\lambda\delta-\lambda^2v/2$ to
$\log E^+_t(m)$, so after $n$ reviews the candidate $m$ is ruled out once
$n(\lambda\delta-\lambda^2v/2)\geq\log(2/\alpha)$, that is, once
$\delta\geq\log(2/\alpha)/(n\lambda)+\lambda v/2$. The stake that makes this
smallest is
\[
  \lambda=\sqrt{\frac{2\log(2/\alpha)}{n\,v}},
  \qquad\text{giving}\qquad
  \delta=\sqrt{\frac{2v\log(2/\alpha)}{n}} .
\]
A large stake is safe when the observations are quiet and dangerous when they are
noisy. Neither $v$ nor the number of reviews is known in advance, so we use the
predictable plug-in rule of \citet{waudbysmith2024estimating} in its fixed-horizon form:
we replace $v$ by a running estimate computed from past reviews only, and $n$ by a
planned horizon $n^\star$,
\begin{equation}
  \lambda_t=\sqrt{\frac{2\log(2/\alpha)}{n^\star\hat v_t}},\qquad
  \lambda^+_t(m)=\min\Bigl\{\lambda_t,\frac{0.9}{m-\underline{Z}_t}\Bigr\},\qquad
  \lambda^-_t(m)=\min\Bigl\{\lambda_t,\frac{0.9}{\overline{Z}_t-m}\Bigr\},
  \label{eq:stakes}
\end{equation}
where a cap is dropped when its denominator is nonpositive and
\[
  \hat v_t=\frac1t\Bigl\{\tfrac14+\sum_{i<t}\bigl(Z_i-\hat\theta^Z_{i-1}\bigr)^2\Bigr\},
  \qquad
  \hat\theta^Z_i=\frac1i\sum_{j\leq i}Z_j,\qquad \hat\theta^Z_0=\hat\theta_0 .
\]
The term $\tfrac14$, the largest variance of a $[0,1]$ variable, keeps early stakes
small before any variance has been observed. The horizon $n^\star$ is the number of
reviews at which the stakes are sharpest. The half-width of the interval is insensitive to $n^\star$, and we use $n^\star=800$ throughout without tuning.

This is what separates the e-process from Serfling's bound. Serfling's width is
set by the largest possible variance, $1/4$; here the width
$\{2\hat v_t\log(2/\alpha)/n\}^{1/2}$ is set by the variance actually observed,
which is $0.147$ for the raw error indicator and far smaller once the predictions
of \eqref{eq:Zt} have learned the rows.


%
\section{Entropy and linear envelopes: construction and proofs}\label{app:entropy}

\subsection{Target and guarantee}\label{ent:app:goal}

We write $\mathbf{p^\star}=\mathbf{p(C^\star)}$ for the corrected composition,
$H^\star=H(\mathbf{p^\star})$ for its entropy, and $\Delta^{L-1}$ for the set of compositions
on the vocabulary $\mathcal L$, with $0\log0=0$. The audit reports a lower end
$L_t$ and an upper end $U_t$ after every audit. We split the total level as
$\alpha=\alpha_-+\alpha_+$, and require
\begin{equation}\label{ent:eq:goal}
  \Pr\bigl\{L_t\le H^\star\le U_t\ \text{for all }t\ge0\bigr\}\ \ge\ 1-\alpha_--\alpha_+
  \qquad\text{for every labeling.}
\end{equation}
Here $\alpha_-$ bounds the probability that the lower end ever exceeds $H^\star$, and
$\alpha_+$ bounds the probability that the upper end ever falls below it. The
experiments use $\alpha_-=\alpha_+=0.0225$.

The diversity interval is $P_2\in[e^{L_t},e^{U_t}]$. We stop once the report is
within 5\% of every value in the interval, which by \cref{ent:prop:report} means
$U_t-L_t\le\log(21/19)$.

\begin{proposition}[Reporting a point value]\label{ent:prop:report}
For $0<\underline P\le\overline P$, the report minimizing
$\max_{P\in[\underline P,\overline P]}|x-P|/P$ is
$x^\star=2\underline P\,\overline P/(\underline P+\overline P)$. Its worst-case
relative error is $(\overline P-\underline P)/(\overline P+\underline P)$, which
equals $\tanh\{(U_t-L_t)/2\}$ for $\underline P=e^{L_t}$ and
$\overline P=e^{U_t}$. So 5\% relative accuracy holds if and only if
$U_t-L_t\le2\,\mathrm{artanh}(0.05)=\log(21/19)$.
\end{proposition}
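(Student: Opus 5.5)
The statement is a one-dimensional minimax problem, so I would solve the inner maximization in closed form and then balance the two endpoint errors. Fix a report $x>0$ and consider $f_x(P)=|x-P|/P=|x/P-1|$ on $[\underline P,\overline P]$. Since $x/P$ is monotone decreasing in $P$, the function $f_x$ is quasi-convex in $P$: it decreases to zero at $P=x$, if $x$ lies in the interval, and increases on either side. Hence its maximum over the interval is attained at an endpoint:
\[
\max_{P\in[\underline P,\overline P]}\frac{|x-P|}{P}=\max\Bigl\{\Bigl|\frac{x}{\underline P}-1\Bigr|,\ \Bigl|\frac{x}{\overline P}-1\Bigr|\Bigr\}.
\]

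Next I would argue that an optimal report lies in $[\underline P,\overline P]$. If $x<\underline P$, replacing $x$ by $\underline P$ decreases both endpoint errors. The case $x>\overline P$ is symmetric. On the interval itself, the first term $x/\underline P-1$ is increasing in $x$ and the second term $1-x/\overline P$ is decreasing. The maximum of an increasing and a decreasing continuous function is minimized where they cross:
\[
\frac{x}{\underline P}+\frac{x}{\overline P}=2,
\]
which gives $x^\star=2\underline P\,\overline P/(\underline P+\overline P)$, the harmonic mean. Substituting back gives the worst-case error
\[
\frac{x^\star}{\underline P}-1=\frac{2\overline P}{\underline P+\overline P}-1=\frac{\overline P-\underline P}{\overline P+\underline P}.
\]
The crossing point is the unique minimizer, because any deviation from it increases one of the two terms.

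For the entropy scale I would set $\underline P=e^{L_t}$ and $\overline P=e^{U_t}$, then multiply the numerator and denominator by $e^{-(L_t+U_t)/2}$. This gives
\[
\frac{e^{(U_t-L_t)/2}-e^{-(U_t-L_t)/2}}{e^{(U_t-L_t)/2}+e^{-(U_t-L_t)/2}}=\tanh\Bigl\{\frac{U_t-L_t}{2}\Bigr\}.
\]
Because $\tanh$ is strictly increasing, the condition $\tanh\{(U_t-L_t)/2\}\le 0.05$ is equivalent to $U_t-L_t\le 2\,\mathrm{artanh}(0.05)$. Finally I would evaluate $\mathrm{artanh}(y)=\tfrac12\log\{(1+y)/(1-y)\}$ at $y=0.05$, which gives $\log(1.05/0.95)=\log(21/19)$.

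No step here is a serious obstacle. The only point needing care is the justification that the inner maximum is attained at an endpoint and that the outer minimizer is the crossing point, and the quasi-convexity and monotonicity arguments above handle both.
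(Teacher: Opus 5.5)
Your proposal is correct and follows essentially the same route as the paper's proof: show the worst case is at an endpoint, rule out reports outside $[\underline P,\overline P]$, balance the increasing term $x/\underline P-1$ against the decreasing term $1-x/\overline P$ to get the harmonic mean, and rewrite the resulting error as $\tanh\{(U_t-L_t)/2\}$. Your quasi-convexity argument and the uniqueness remark fill in details the paper's proof leaves implicit.
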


\begin{proof}
For $x\in[\underline P,\overline P]$ the worst case is at an endpoint, giving
$\max\{x/\underline P-1,\,1-x/\overline P\}$. A report outside the interval is worse
at the nearer endpoint. The first term increases in $x$ and the second decreases;
they meet at $x^\star$ with value $(\overline P-\underline P)/(\overline P+\underline P)$.
Writing $W=U_t-L_t$, this equals $(e^W-1)/(e^W+1)=\tanh(W/2)$, and
$\tanh(W/2)\le0.05$ if and only if $W\le\log(1.05/0.95)$.
\end{proof}

\subsection{The reference composition and the observation from each audit}\label{ent:app:obs}

\paragraph{Gibbs' inequality.}
For a reference $\mathbf r$ with every $r_s>0$, $\mathrm{CE}(\mathbf p,\mathbf r)-H(\mathbf p)=\mathrm{KL}(\mathbf p\|\mathbf r)$.
By Jensen's inequality,
\[
  -\mathrm{KL}(\mathbf p\|\mathbf r)=\sum_{s:p_s>0}p_s\log(r_s/p_s)\le\log\sum_{s:p_s>0}r_s\le0,
\]
with equality if and only if $\mathbf r=\mathbf p$ \citep[Theorem~2.6.3]{cover1991elements}. Since
$\partial_sH(\mathbf r)=-\log r_s-1$, formula \eqref{ent:eq:tangent} gives
$d_s(\mathbf r)=\log(1/r_s)$.

\paragraph{The plug-in composition and the reference.}
Let $O_{t,s}$ be the number of the first $t$ audited records verified as $s$, and let
${q}_{g,t}(s)$ be the predicted share of label $s$ among the unaudited records of
stratum $g$ (the predictions of \cref{sec:linear}). Before audit $t$,
\begin{equation}\label{ent:eq:ref}
  \hat p_{t-1,s}=\frac1N\Bigl(O_{t-1,s}+\sum_gM_{g,t}\,{q}_{g,t}(s)\Bigr),\qquad
  \mathbf{r}_t=(1-\eta)\mathbf{\hat p}_{t-1}+\frac\eta L,\qquad
  \psi_{t,s}=\log\frac1{r_{t,s}},
\end{equation}
with $\eta=0.01$. Four properties follow:
\begin{itemize}
  \item $\sum_s\hat p_{t-1,s}=1$, because $\sum_sO_{t-1,s}=t-1$ and
  $\sum_gM_{g,t}=N-t+1$;
  \item $\mathbf{\hat p}_{t-1}$ equals $\mathbf{p^\star}$ once every record has been audited;
  \item mixing in the uniform distribution gives $\eta/L\le r_{t,s}\le1$, so every
  contribution $\psi_{t,s}$ lies in $[0,\log(L/\eta)]$, even for labels not yet seen;
  \item any predictable reference gives a valid upper end, and the choice of $\mathbf{r}_t$
  affects only the width, through $\mathrm{KL}(\mathbf{p^\star}\|\mathbf{r}_t)$, which is second order
  in the error of $\mathbf{\hat p}_{t-1}$.
\end{itemize}
The predicted contribution of stratum $g$ is
$\hat\mu_{g,t}=\sum_s{q}_{g,t}(s)\,\psi_{t,s}$.

\paragraph{The observations.}
Audit $t$ gives the upper-end observation, which is \eqref{eq:Zt} with every row of
the contribution matrix equal to $\bm{\psi}_t$:
\begin{equation}\label{ent:eq:ZU}
  Z^U_t=\underbrace{\sum_s\hat p_{t-1,s}\,\psi_{t,s}}_{\text{plug-in cross-entropy}}
  +\frac{M_{G_t,t}}{N\pi_{G_t,t}}\bigl(\psi_{t,V_t}-\hat\mu_{G_t,t}\bigr).
\end{equation}
For a candidate composition $\mathbf p$, let $\mathbf{\tilde p}=(N \mathbf p+1)/(N+L)$. The same audit gives
the lower-end observation
\begin{equation}\label{ent:eq:Zp}
  Z_t(\mathbf p)=\frac1N\Bigl(\sum_{i\in \mathcal A_{t-1}}\log\frac1{\tilde p_{v_i}}+\sum_gM_{g,t}\hat\mu_{g,t}\Bigr)
  +\frac{M_{G_t,t}}{N\pi_{G_t,t}}\Bigl(\log\frac1{\tilde p_{V_t}}-\hat\mu_{G_t,t}\Bigr).
\end{equation}
This is \eqref{ent:eq:ZU} with $\log(1/\tilde p_s)$ in place of $\psi_{t,s}$. The
prediction term $\hat\mu_{g,t}$ is kept the same for every candidate, which is what
makes the computation in \cref{ent:app:profile} convex. The smoothing keeps every
contribution finite, at most $\log(N+L)$.

\begin{lemma}[One audit, every cross-entropy]\label{ent:lem:obs}
If $\pi_{g,t}>0$ whenever $M_{g,t}>0$, then
$\mathbb E[Z^U_t\mid\mathcal F_{t-1}]=\mathrm{CE}(\mathbf{p^\star},\mathbf{r}_t)=H^\star+\mathrm{KL}(\mathbf{p^\star}\|\mathbf{r}_t)\ge H^\star$,
and $\mathbb E[Z_t(\mathbf p)\mid\mathcal F_{t-1}]=\mathrm{CE}(\mathbf{p^\star},\mathbf{\tilde p})$ for every
composition $\mathbf p$.
\end{lemma}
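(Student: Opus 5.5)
Both claims reduce to \cref{lem:unbiased}, applied with a contribution matrix whose rows are all equal to a single predictable vector of per-label contributions. The one subtlety is that \cref{lem:unbiased} assumes $D_{rs}\in[0,1]$ and a prediction of the form $\mathbf d_{r(g)}^\top\mathbf q_{g,t}$. Its unbiasedness proof, however, uses neither boundedness nor the specific form of the prediction. It only needs three facts: the contributions are fixed before audit $t$; the predicted stratum terms are $\mathcal F_{t-1}$-measurable; and the same predicted term enters both $\hat\theta_{t-1}$ and the residual. I will therefore restate the unbiasedness argument for a generic predictable contribution vector $\mathbf c\in\mathbb R^L$ and generic predictable stratum predictions $\hat\mu_{g,t}$.

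\textbf{Generic identity.} Let $\mathbf c$ be $\mathcal F_{t-1}$-measurable. Define $W_t=\frac1N\bigl(\sum_{i\in\mathcal A_{t-1}}c_{v_i}+\sum_g M_{g,t}\hat\mu_{g,t}\bigr)+\frac{M_{G_t,t}}{N\pi_{G_t,t}}\bigl(c_{V_t}-\hat\mu_{G_t,t}\bigr)$. Condition on $\mathcal F_{t-1}$ and on $G_t=g$. The audited record is uniform on $g\setminus\mathcal A_{t-1}$, so $\E[c_{V_t}\mid\mathcal F_{t-1},G_t=g]=\frac1{M_{g,t}}\sum_{i\in g\setminus\mathcal A_{t-1}}c_{v_i}$. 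Averaging over $G_t$ with weights $\pi_{g,t}$, the weights cancel because $\pi_{g,t}>0$ whenever $M_{g,t}>0$. This leaves $\frac1N\sum_g\sum_{i\in g\setminus\mathcal A_{t-1}}c_{v_i}-\frac1N\sum_gM_{g,t}\hat\mu_{g,t}$. The prediction terms cancel against those in the first part of $W_t$, so
\[
\E[W_t\mid\mathcal F_{t-1}]=\frac1N\sum_{i=1}^Nc_{v_i}=\sum_s p^\star_s c_s .
\]
This uses the split \eqref{eq:split}, in which the audited and unaudited records partition $[N]$.

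\textbf{Upper observation.} Take $\mathbf c=\bm\psi_t=\log(1/\mathbf r_t)$. This is predictable because $\mathbf r_t$ is built from $\hat{\mathbf p}_{t-1}$, which depends only on $\mathcal F_{t-1}$. Take $\hat\mu_{g,t}=\mathbf q_{g,t}^\top\bm\psi_t$. Then I verify that the leading term of \eqref{ent:eq:ZU} coincides with the generic one:
\[
\sum_s\hat p_{t-1,s}\psi_{t,s}=\frac1N\Bigl(\sum_sO_{t-1,s}\psi_{t,s}+\sum_gM_{g,t}\mathbf q_{g,t}^\top\bm\psi_t\Bigr)
=\frac1N\Bigl(\sum_{i\in\mathcal A_{t-1}}\psi_{t,v_i}+\sum_gM_{g,t}\hat\mu_{g,t}\Bigr),
\]
using \eqref{ent:eq:ref}. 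The generic identity then gives $\E[Z^U_t\mid\mathcal F_{t-1}]=\sum_sp^\star_s\log(1/r_{t,s})=\mathrm{CE}(\mathbf p^\star,\mathbf r_t)$. Since every $r_{t,s}\ge\eta/L>0$, this quantity is finite, and Gibbs' inequality as proved above gives $\mathrm{CE}=H^\star+\mathrm{KL}(\mathbf p^\star\|\mathbf r_t)\ge H^\star$.

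\textbf{Lower observation.} Fix a composition $\mathbf p$ and take $\mathbf c=\log(1/\tilde{\mathbf p})$, with $\hat\mu_{g,t}$ the same shared prediction as above. Since $\tilde p_s\ge1/(N+L)>0$, the vector $\mathbf c$ is deterministic given $\mathbf p$ and finite. The expression \eqref{ent:eq:Zp} has exactly the generic form $W_t$. The key observation is that the prediction $\hat\mu_{g,t}$ need not match $\mathbf q_{g,t}^\top\mathbf c$: it only has to be predictable and appear identically in both places, and then it cancels. Hence $\E[Z_t(\mathbf p)\mid\mathcal F_{t-1}]=\sum_sp^\star_s\log(1/\tilde p_s)=\mathrm{CE}(\mathbf p^\star,\tilde{\mathbf p})$.

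The main obstacle is not any computation. It is making explicit that \cref{lem:unbiased} does not depend on the $[0,1]$ range or on the prediction being consistent with the contribution vector. I therefore re-derive the cancellation rather than cite the lemma verbatim, and I note that the conclusion holds simultaneously for all $\mathbf p$, since each $\mathbf p$ is fixed and not chosen after observing $V_t$.
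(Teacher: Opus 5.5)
Your proposal is correct and matches the paper's proof: the paper also proves a generic identity for an arbitrary predictable reference $\bm\vartheta$ (contributions $\log(1/\vartheta_s)$) with arbitrary predictable $\hat\mu_{g,t}$, where the $\pi_{g,t}$ and $\hat\mu$ terms cancel, and then specializes to $\bm\vartheta=\mathbf r_t$ (checking the plug-in cross-entropy identity) and to $\bm\vartheta=\tilde{\mathbf p}$. Your use of a general contribution vector $\mathbf c$ instead of $\log(1/\bm\vartheta)$ is only a cosmetic generalization of the same argument.
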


\begin{proof}
Let $\bm  \vartheta\in(0,1]^L$ be predictable, let the $\hat\mu_{g,t}$ be any predictable
numbers, and set
\[
  Y_t=\frac1N\Bigl(\sum_{i\in \mathcal A_{t-1}}\log\frac1{\vartheta_{v_i}}+\sum_gM_{g,t}\hat\mu_{g,t}\Bigr)
  +\frac{M_{G_t,t}}{N\pi_{G_t,t}}\Bigl(\log\frac1{\vartheta_{V_t}}-\hat\mu_{G_t,t}\Bigr).
\]
Given $\mathcal F_{t-1}$ and $G_t=g$, the audited record is uniform over the
$M_{g,t}$ unaudited records of $g$. Averaging over $G_t$ cancels $\pi_{g,t}$ exactly
as in the proof of \cref{lem:unbiased}, and the $\hat\mu$ terms cancel. So
$\mathbb E[Y_t\mid\mathcal F_{t-1}]=N^{-1}\sum_{i=1}^N\log(1/\vartheta_{v_i})=\mathrm{CE}(\mathbf{p^\star},\bm \vartheta)$.
The argument uses only that $\hat\mu_{g,t}$ is predictable.

With $\bm \vartheta=\mathbf{r}_t$ we get $Y_t=Z^U_t$, because \eqref{ent:eq:ref} gives
\[
  \frac1N\Bigl(\sum_{i\in \mathcal A_{t-1}}\psi_{t,v_i}+\sum_gM_{g,t}\hat\mu_{g,t}\Bigr)=\sum_s\hat p_{t-1,s}\psi_{t,s}.
\]
With $\bm \vartheta=\mathbf{\tilde p}$ we get $Y_t=Z_t(\mathbf p)$.
\end{proof}

\subsection{The upper end}\label{ent:app:upper}

For each candidate value $h$, the upper-end e-process is
\[
  E^U_t(h)=\prod_{i\le t}\bigl\{1-\lambda^U_i\,(Z^U_i-h)\bigr\},
\]
the downward e-process of \cref{sec:linear} applied to \eqref{ent:eq:ZU}. The stake
$\lambda^U_i\ge0$ is predictable and does not depend on $h$.

\paragraph{Range and cap.}
Before audit $t$, everything in \eqref{ent:eq:ZU} is known except the stratum drawn
and the label revealed. The stratum can be any $g$ with $M_{g,t}>0$, and the label
can be any label in $\mathcal L$, including one not yet seen. So the largest possible
value is
\begin{equation}\label{ent:eq:range-upper}
  \overline Z^U_t=\sum_s\hat p_{t-1,s}\psi_{t,s}
  +\max_{g:\,M_{g,t}>0}\frac{M_{g,t}}{N\pi_{g,t}}\Bigl(\max_s\psi_{t,s}-\hat\mu_{g,t}\Bigr).
\end{equation}
Let $h^{\min}_t$ be the smallest candidate tested at audit $t$: $0$ in the basic
version and last lower endpoint $L_{t-1}$ with cross-capping (\cref{ent:prop:cross}). We require
\begin{equation}\label{ent:eq:cap-upper}
  0\le\lambda^U_t\le\frac{0.9}{\overline Z^U_t-h^{\min}_t},
\end{equation}
dropping the cap when its denominator is nonpositive.

\begin{proposition}[Upper confidence sequence]\label{ent:prop:upper}
Let $U_t$ be the smallest $h\ge h^{\min}_t$ with $E^U_s(h)\ge1/\alpha_+$ for some
$s\le t$, or $U_t=\log L$ if there is none, with $h^{\min}_t=0$. Then
$\Pr\{H^\star\le U_t\ \forall t\}\ge1-\alpha_+$ for every labeling.
\end{proposition}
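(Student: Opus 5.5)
The plan is to show that the upper-end e-process, evaluated at the true entropy $h=H^\star$, is a nonnegative supermartingale with initial value one. Ville's inequality (\cref{lem:ville}) with threshold $1/\alpha_+$ then bounds the probability that it ever crosses. The remaining work is to show that on the event where $E^U_s(H^\star)<1/\alpha_+$ for all $s$, the reported $U_t$ can never fall below $H^\star$.

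\emph{Step 1: the process at $H^\star$ is a supermartingale.} Fix $h=H^\star$. By \cref{ent:lem:obs}, $\mathbb E[Z^U_t\mid\mathcal F_{t-1}]=\mathrm{CE}(\mathbf{p^\star},\mathbf r_t)\ge H^\star$. The stake $\lambda^U_t\ge0$ is predictable, so
\[
\mathbb E\bigl[1-\lambda^U_t(Z^U_t-H^\star)\bigm|\mathcal F_{t-1}\bigr]=1-\lambda^U_t\,\mathrm{KL}(\mathbf{p^\star}\|\mathbf r_t)\le1.
\]
The process therefore has conditional mean at most one rather than exactly one, which is why we get a supermartingale and not a martingale.

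\emph{Step 2: every factor is nonnegative.} By construction \eqref{ent:eq:range-upper}, $Z^U_t\le\overline Z^U_t$ for every stratum that can be drawn and every label that can be revealed. Since $H^\star\ge0=h^{\min}_t$, we have $\overline Z^U_t-H^\star\le\overline Z^U_t-h^{\min}_t$. If this denominator is positive, the cap \eqref{ent:eq:cap-upper} gives $\lambda^U_t(Z^U_t-H^\star)\le0.9$, so the factor is at least $0.1$. If $\overline Z^U_t\le H^\star$, then $Z^U_t-H^\star\le0$ and the factor is at least one. In both cases $E^U_t(H^\star)$ is a nonnegative supermartingale started at one. \Cref{lem:ville} then gives $\Pr\{\exists t: E^U_t(H^\star)\ge1/\alpha_+\}\le\alpha_+$.

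\emph{Step 3: translating non-rejection of $H^\star$ into $H^\star\le U_t$.} This is the main obstacle, because $U_t$ is defined as the smallest rejected $h$, and we must ensure that no $h<H^\star$ is rejected while $H^\star$ survives. The key is monotonicity. Each factor $1-\lambda^U_i(Z^U_i-h)$ is nondecreasing in $h$, and it is nonnegative for all $h\ge0$ by the same cap argument as in Step 2. Hence $E^U_s(h)$ is nondecreasing in $h$ on $[0,\infty)$ for every $s$. So if some $h\in[0,H^\star)$ had $E^U_s(h)\ge1/\alpha_+$, then $E^U_s(H^\star)\ge1/\alpha_+$ as well. On the complement of the bad event from Step 2, every rejected $h$ therefore exceeds $H^\star$, which gives $U_t\ge H^\star$. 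The other case is that no $h$ is ever rejected, and then $U_t=\log L\ge H^\star$ because entropy on $L$ labels is at most $\log L$. This holds simultaneously for all $t$, which yields the claim for every labeling.
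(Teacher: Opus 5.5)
Your proposal is correct and follows the paper's proof essentially step for step. The shared steps are: positivity of each factor from the cap \eqref{ent:eq:cap-upper} with $h^{\min}_t=0\le H^\star$; the supermartingale property at $h=H^\star$ via \cref{ent:lem:obs} and $\mathrm{KL}\ge0$; Ville's inequality at threshold $1/\alpha_+$; monotonicity of the factors in $h\ge0$, which transfers any rejection of $h<H^\star$ to $H^\star$; and $\log L\ge H^\star$ for the case where nothing is rejected. Your explicit treatment of the dropped-cap case ($\overline Z^U_t\le H^\star$) is a small detail that the paper leaves implicit.
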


\begin{proof}
\emph{Positivity.} For $h\ge h^{\min}_i$, the cap \eqref{ent:eq:cap-upper} gives
$1-\lambda^U_i(Z^U_i-h)\ge1-\lambda^U_i(\overline Z^U_i-h^{\min}_i)\ge0.1$.

\emph{Supermartingale.} At $h=H^\star\ge0$, \cref{ent:lem:obs} gives each factor
conditional mean $1-\lambda^U_i\,\mathrm{KL}(\mathbf{p^\star}\|\mathbf {r_i})\le1$. So $E^U_t(H^\star)$
is a nonnegative supermartingale starting at one. By Ville's inequality
(\cref{lem:ville}) it ever reaches $1/\alpha_+$ with probability at most $\alpha_+$.

\emph{Monotonicity.} Each factor is positive and nondecreasing in $h\ge0$. So if some
$h<H^\star$ were ruled out at audit $s$, then $E^U_s(H^\star)\ge E^U_s(h)\ge1/\alpha_+$.

\emph{Start.} $\log L\ge H(\mathbf p)$ for every composition $\mathbf p$.
\end{proof}

\subsection{The lower end}\label{ent:app:lower}

\paragraph{Smoothing.}
For any composition $\mathbf p$, $1/(N+L)\le\tilde p_s\le1$. At the truth:
\begin{itemize}
  \item if $p^\star_s>0$, then $\tilde p^\star_s\ge p^\star_s/(1+L/N)$, so
  $\log(1/\tilde p^\star_s)\le\log(1/p^\star_s)+c$ with $c=\log(1+L/N)$;
  \item if $p^\star_s=0$, no record is verified as $s$, so label $s$ never enters
  $Z_t(\mathbf p)$.
\end{itemize}
By \cref{ent:lem:obs},
\begin{equation}\label{ent:eq:smooth-mean}
  \mathbb E\bigl[Z_t(\mathbf{p^\star})\bigm|\mathcal F_{t-1}\bigr]\le H^\star+c .
\end{equation}
On the Kenyan data, $c=\log(1+50/33{,}403)=0.0015$ nats.

\paragraph{Candidates.}
Every audited record verified as $s$ belongs to $p^\star_s$, so
$\mathbf{p^\star}\in\mathcal P_t=\{\mathbf p\in\Delta^{L-1}:p_s\ge O_{t,s}/N\}$ for every $t$, whatever
happens in the audit.

\paragraph{The e-processes and the lower end.}
With a predictable stake $\lambda^L_i\ge0$ shared by all candidates and all $h$, let
\[
  E^L_t(\mathbf p;h)=\prod_{i\le t}\bigl\{1+\lambda^L_i\bigl(Z_i(\mathbf p)-h-c\bigr)\bigr\},\qquad
  F_t(\mathbf p;h)=\log E^L_t(\mathbf p;h).
\]
At audit $t$, a value $h$ is ruled out if $\min_{\mathbf p\in\mathcal P_t}F_t(\mathbf p;h)\ge\log(1/\alpha_-)$,
or if a certified lower bound on this minimum is (\cref{ent:app:profile}). The lower
end $L_t$ is the largest value ruled out so far, and $L_0=0$.

\paragraph{How $Z_i(\mathbf p)$ depends on $\mathbf p$.}
Collecting terms in \eqref{ent:eq:Zp},
\begin{equation}\label{ent:eq:Zp-linear}
  Z_i(\mathbf p)=\sum_s\chi_{i,s}\log\frac1{\tilde p_s}+\varrho_i,\qquad
  \chi_{i,s}=\frac1N\Bigl(O_{i-1,s}+\frac{M_{G_i,i}}{\pi_{G_i,i}}\mathbf 1\{V_i=s\}\Bigr)\ge0,
\end{equation}
where $\varrho_i=N^{-1}(\sum_gM_{g,i}\hat\mu_{g,i}-M_{G_i,i}\hat\mu_{G_i,i}/\pi_{G_i,i})$
does not depend on $\mathbf p$, and $\sum_s\chi_{i,s}=N^{-1}(i-1+M_{G_i,i}/\pi_{G_i,i})$.

\paragraph{The cap.}
Let $h^{\max}_t$ be the largest candidate tested at audit $t$: $\log L$ in the basic
version and last upper endpoint $U_{t-1}$ with cross-capping. We require
$\lambda^L_t(\sum_s\chi_{t,s}-\varrho_t+h^{\max}_t+c)\le1$ for every stratum that can
be drawn; the label revealed does not affect the left side. Substituting,
\begin{equation}\label{ent:eq:lower-cap}
  \lambda^L_t\Bigl[\frac{t-1}N+\frac{M_{g,t}}{N\pi_{g,t}}\bigl(1+\hat\mu_{g,t}\bigr)-\frac1N\sum_kM_{k,t}\hat\mu_{k,t}+h^{\max}_t+c\Bigr]\le1
  \quad\text{for every }g\text{ with }M_{g,t}>0 .
\end{equation}
The $\pi_t$-weighted average of these brackets is at least $1+h^{\max}_t+c$. It grows when a large stratum, or one with
a large predicted contribution, is drawn with small probability. Priority-aware
sampling does exactly this for large, well-known strata, which is why the
allocation in \cref{ent:app:alloc} includes a floor.

\begin{theorem}[Lower confidence sequence]\label{ent:thm:lower}
Suppose each $\lambda^L_i$ is predictable and satisfies \eqref{ent:eq:lower-cap} with
$h^{\max}_i=\log L$. Then $\Pr\{L_t\le H^\star\ \forall t\}\ge1-\alpha_-$ for every
labeling. Together with \cref{ent:prop:upper}, the interval $[L_t,U_t]$
satisfies \eqref{ent:eq:goal}.
\end{theorem}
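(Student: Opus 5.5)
The plan is to isolate a single nonnegative supermartingale, namely the lower-end e-process evaluated at the truth, $E^L_t(\mathbf p^\star;H^\star)$, and to deduce every required fact from it. Once Ville's inequality controls this process, two observations finish the argument. First, $\mathbf p^\star\in\mathcal P_t$ for every $t$. Second, a false rejection of some $h\le H^\star$ forces the profiled minimum, and hence $F_t(\mathbf p^\star;\cdot)$, to be large at $H^\star$.

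\emph{Positivity.} For every $\mathbf p$ and $h\le h^{\max}_i=\log L$, I will show $1+\lambda^L_i(Z_i(\mathbf p)-h-c)\ge0$.
\begin{itemize}
\item By \eqref{ent:eq:Zp-linear}, $Z_i(\mathbf p)=\sum_s\chi_{i,s}\log(1/\tilde p_s)+\varrho_i$ with $\chi_{i,s}\ge0$.
\item Since $\log(1/\tilde p_s)\ge0$, this gives $Z_i(\mathbf p)\ge\varrho_i$.
\item Hence the factor is at least $1-\lambda^L_i(-\varrho_i+h^{\max}_i+c)$.
\item This is at least $1-\lambda^L_i(\sum_s\chi_{i,s}-\varrho_i+h^{\max}_i+c)\ge0$ by the cap \eqref{ent:eq:lower-cap}.
\end{itemize}
The cap must hold for whichever stratum $G_i$ is drawn, and the bracket in \eqref{ent:eq:lower-cap} is exactly $\sum_s\chi_{i,s}-\varrho_i+h^{\max}_i+c$ evaluated at $G_i=g$. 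Moreover, $H^\star\le\log L=h^{\max}_i$. So the factors at $(\mathbf p^\star,H^\star)$ are nonnegative, and they are nonnegative for all $h\le H^\star$ as well.

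\emph{Supermartingale.} Because $\lambda^L_i$ is predictable, \eqref{ent:eq:smooth-mean} gives
\[
\mathbb E\bigl[1+\lambda^L_i(Z_i(\mathbf p^\star)-H^\star-c)\bigm|\mathcal F_{i-1}\bigr]\le 1 .
\]
Therefore $E^L_t(\mathbf p^\star;H^\star)$ is a nonnegative supermartingale starting at one. By \cref{lem:ville},
\[
\Pr\{\exists t: F_t(\mathbf p^\star;H^\star)\ge\log(1/\alpha_-)\}\le\alpha_- .
\]

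\emph{Monotonicity in $h$.} Suppose $L_t>H^\star$ for some $t$. Then some $h>H^\star$ was ruled out at some audit $s\le t$, meaning
\[
\min_{\mathbf p\in\mathcal P_s}F_s(\mathbf p;h)\ge\log(1/\alpha_-),
\]
or a certified lower bound on this minimum exceeds the threshold, which implies the same inequality. Since $\mathbf p^\star\in\mathcal P_s$, we get $F_s(\mathbf p^\star;h)\ge\log(1/\alpha_-)$. Each factor $1+\lambda^L_i(Z_i-h-c)$ is nonincreasing in $h$ and nonnegative on $h\le\log L$. Hence $E^L_s(\mathbf p^\star;H^\star)\ge E^L_s(\mathbf p^\star;h)\ge1/\alpha_-$, so that $L_t>H^\star$ is contained in the Ville event. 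I also need $h\le\log L$ for the tested values, which holds because the candidate range is capped there. The initial value $L_0=0\le H^\star$ is trivially correct.

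\emph{Combining the ends.} A union bound with \cref{ent:prop:upper} yields \eqref{ent:eq:goal}.

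The main obstacle is the positivity step. The factor must be nonnegative uniformly over all candidates $\mathbf p$, including those with tiny $\tilde p_s$, and over all labels that could be revealed. Only the crude bound $Z_i(\mathbf p)\ge\varrho_i$ avoids depending on $\mathbf p$. I must check that this bound matches the sign conventions in the cap \eqref{ent:eq:lower-cap}: the term $\sum_s\chi_{i,s}$ only adds slack, and $\varrho_i$ is as written. I also need to confirm that the exclusion of labels with $p^\star_s=0$ does not break \eqref{ent:eq:smooth-mean} when such labels are never revealed. This holds because those labels have zero weight in $\chi$ along every realized path.
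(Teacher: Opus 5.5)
Your proposal is correct and follows essentially the same route as the paper's proof. You get positivity of every factor from the cap, using the cruder bound $Z_i(\mathbf p)\ge\varrho_i$ where the paper writes the factor as $\varsigma+\lambda J(\mathbf p)$ with $\varsigma\ge\lambda\sum_s\chi_s$; the paper's version gives strict positivity, which it later reuses for convexity but which is not needed here. You then obtain the supermartingale property at $(\mathbf p^\star,H^\star)$ from the smoothed-mean bound and Ville's inequality, and you reduce $L_t>H^\star$ to a Ville crossing using $\mathbf p^\star\in\mathcal P_s$ together with monotonicity of $F_s$ in $h$ (which relies on the stake $\lambda^L_i$ being shared across all $h$, as it is by construction).
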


\begin{proof}
Fix $i$, $h\le\log L$ and a composition $p$. Write $\chi_s=\chi_{i,s}$,
$\lambda=\lambda^L_i$, $\varsigma=1+\lambda(\varrho_i-h-c)$ and
$J(p)=\sum_s\chi_s\log(1/\tilde p_s)\ge0$, so that the factor is
$\varsigma+\lambda J(p)$.

\emph{Positivity.} By \eqref{ent:eq:lower-cap}, $\varsigma\ge\lambda\sum_s\chi_s>0$
when $\lambda>0$, so every factor is positive.

\emph{Monotonicity.} Each factor decreases in $h$, and the stake does not depend on
$h$. So $F_t(p;h)$ is nonincreasing in $h\le\log L$.

\emph{Validity.} By \eqref{ent:eq:smooth-mean}, at $\mathbf p=\mathbf{p^\star}$ and $h=H^\star$ every
factor has conditional mean at most one. So $E^L_t(\mathbf{p^\star};H^\star)$ is a nonnegative
supermartingale starting at one, and by \cref{lem:ville} it ever reaches $1/\alpha_-$
with probability at most $\alpha_-$.

Suppose $L_t>H^\star$. Then some $h\in(H^\star,\log L]$ was ruled out at some audit
$s\le t$, by a statistic no larger than $\min_{\mathcal P_s}F_s(\cdot;h)$. Since
$\mathbf{p^\star}\in\mathcal P_s$ and $F_s$ is nonincreasing in $h$,
$\log(1/\alpha_-)\le F_s(\mathbf{p^\star};h)\le F_s(\mathbf{p^\star};H^\star)$. This has probability at
most $\alpha_-$.

\emph{Two sides.} A union bound with \cref{ent:prop:upper} gives \eqref{ent:eq:goal}.
\end{proof}

\subsection{Computing the lower end: the convex profile}\label{ent:app:profile}

Ruling out $h$ requires the minimum of $F_t(\cdot;h)$ over $\mathcal P_t$. The cap
that keeps the factors positive also makes this a convex problem.

\begin{proposition}[Convex profile]\label{ent:prop:profile}
Under \eqref{ent:eq:lower-cap}, $\mathbf p\mapsto F_t(\mathbf p;h)$ is convex on $\Delta^{L-1}$ for
every $h\le h^{\max}_t$. For any trial point $\mathbf p^\circ\in\mathcal P_t$, the value
$\underline F_t(h)$ in \eqref{ent:eq:plane} is a lower bound on
$\min_{\mathcal P_t}F_t(\cdot;h)$. Using it in place of the minimum preserves
\cref{ent:thm:lower}.
\end{proposition}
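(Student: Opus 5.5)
The plan has three parts. First, show that each logarithmic factor is convex in $\mathbf p$. Second, use the supporting-plane inequality to get a certified lower bound. Third, check that replacing the exact minimum by this bound can only shrink the set of ruled-out values, so \cref{ent:thm:lower} survives.

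\emph{Convexity.} Fix $i\le t$ and $h\le h^{\max}_i$. As in the proof of \cref{ent:thm:lower}, write the $i$-th factor as $\varsigma_i+\lambda_i J_i(\mathbf p)$, where
\[
J_i(\mathbf p)=\sum_s\chi_{i,s}\log(1/\tilde p_s)
\]
by \eqref{ent:eq:Zp-linear}, and $\varsigma_i=1+\lambda_i(\varrho_i-h-c)$.
\begin{itemize}
\item The map $\mathbf p\mapsto\tilde{\mathbf p}$ is affine, and $\chi_{i,s}\ge0$. Hence $J_i$ is convex on $\Delta^{L-1}$.
\item The cap \eqref{ent:eq:lower-cap} gives $\varsigma_i\ge\lambda_i\sum_s\chi_{i,s}>0$ whenever $h\le h^{\max}_i$, so the argument of the logarithm is positive.
\item Convexity of $\log$ composed with a convex positive function is not automatic, since $\log$ is concave. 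I will instead use $\log(\varsigma+\lambda J)=\log\varsigma+\log(1+(\lambda/\varsigma)J)$ and try to show that $u\mapsto\log(1+a\sum_s\chi_s\log(1/x_s))$ is convex in $\mathbf x$.
\item I will compute the Hessian directly. Set $\phi=1+aJ$ and write $w_s$ for the smoothing constant, so $x_s=w_s(Np_s+1)$. Then
\[
\nabla^2\log\phi=\frac{a\nabla^2J}{\phi}-\frac{a^2\nabla J\nabla J^\top}{\phi^2},\qquad \nabla^2J=\mathrm{diag}(\chi_s w_s^2N^2/\tilde p_s^2).
\]
\item By Cauchy--Schwarz, $(\sum_s\chi_s\xi_s/\tilde p_s)^2\le(\sum_s\chi_s)(\sum_s\chi_s\xi_s^2/\tilde p_s^2)$. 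So the Hessian is positive semidefinite provided $a\sum_s\chi_s\le\phi$.
\item That condition is exactly what the cap delivers: $\lambda\sum_s\chi_s\le\varsigma\le\varsigma+\lambda J$.
\end{itemize}
Summing over $i$ gives convexity of $F_t(\cdot;h)$. This Hessian computation, and matching it precisely to the bracket in \eqref{ent:eq:lower-cap}, is the step I expect to be the main obstacle.

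\emph{Lower bound.} $\mathcal P_t$ is a compact convex polytope and $F_t(\cdot;h)$ is convex and differentiable there, because the smoothing keeps $\tilde p_s\ge 1/(N+L)$. For any trial point $\mathbf p^\circ\in\mathcal P_t$, convexity gives $F_t(\mathbf p;h)\ge F_t(\mathbf p^\circ;h)+\nabla F_t(\mathbf p^\circ;h)^\top(\mathbf p-\mathbf p^\circ)$. Minimizing the right side over $\mathcal P_t$ is a linear program over a shifted simplex. Its solution is explicit: put all free mass $1-\sum_s O_{t,s}/N$ on the coordinate with the smallest gradient entry. I expect this to be the value $\underline F_t(h)$ of \eqref{ent:eq:plane}, and it satisfies $\underline F_t(h)\le\min_{\mathcal P_t}F_t(\cdot;h)$.

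\emph{Preserving the theorem.} Ruling out $h$ when $\underline F_t(h)\ge\log(1/\alpha_-)$ implies the exact-minimum criterion is also met. The proof of \cref{ent:thm:lower} only used that the rejection statistic is at most $\min_{\mathcal P_s}F_s(\cdot;h)\le F_s(\mathbf p^\star;h)$. That chain still holds, together with monotonicity in $h$ and Ville's inequality, so coverage at level $1-\alpha_-$ is unchanged.
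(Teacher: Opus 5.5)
Your proposal is correct and follows the paper's proof essentially step for step. You make the Hessian of $\log\{\varsigma+\lambda J\}$ nonnegative via Cauchy--Schwarz under the cap-implied condition $\lambda\sum_s\chi_s\le\varsigma\le\varsigma+\lambda J$, minimize the supporting plane over the shifted simplex $\mathcal P_t$, and observe that the proof of \cref{ent:thm:lower} never uses how the minimum was computed; your rescaling by $\varsigma$ is cosmetic, and fixing $h\le h^{\max}_t$ covers every factor $i\le t$ because the ceiling $h^{\max}_i$ is nonincreasing in $i$.
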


\begin{proof}
Keep the notation of the previous proof and let $\upsilon_s(\mathbf p)=N/(Np_s+1)$. The
derivatives of $\log(1/\tilde p_s)$ in $p_s$ are $-\upsilon_s$ and $\upsilon_s^2$,
with no cross terms. For $\zeta(\mathbf p)=\log\{\varsigma+\lambda J(\mathbf p)\}$ and any
direction $\mathbf y$,
\[
  \mathbf y^\top\bm \nabla^2\zeta\,y=\frac{\lambda\sum_s\chi_s\upsilon_s^2y_s^2}{\varsigma+\lambda J}-\frac{\lambda^2\bigl(\sum_s\chi_s\upsilon_sy_s\bigr)^2}{(\varsigma+\lambda J)^2}.
\]
By Cauchy--Schwarz,
$(\sum_s\chi_s\upsilon_sy_s)^2\le(\sum_s\chi_s)(\sum_s\chi_s\upsilon_s^2y_s^2)$. So
this is nonnegative whenever $\varsigma+\lambda J\ge\lambda\sum_s\chi_s$, which the
cap gives. Summing over $i$ preserves convexity.

The lower bound is the supporting-plane inequality below. The proof of
\cref{ent:thm:lower} never uses how the minimum was computed.
\end{proof}

\paragraph{The supporting-plane bound.}
For a convex differentiable function,
$F(\mathbf p)\ge F(\mathbf p^\circ)+\nabla F(\mathbf p^\circ)^\top(\mathbf p- \mathbf p^\circ)$ \citep{vandenberghe2004convex}.
Over $\mathcal P_t$, the linear part is smallest with every coordinate at its lower
limit $O_{t,s}/N$ and the free mass $1-t/N$ on the label with the smallest gradient
coordinate. Writing $\bm \nabla=\nabla F_t(\mathbf p^\circ;h)$,
\begin{equation}\label{ent:eq:plane}
  \underline F_t(h)=F_t(\mathbf p^\circ;h)-\bm \nabla^\top \mathbf p^\circ+\frac1N\sum_s\nabla_sO_{t,s}+\Bigl(1-\frac tN\Bigr)\min_s\nabla_s\ \le\ \min_{p\in\mathcal P_t}F_t(\mathbf p;h),
\end{equation}
where
\[
  \partial_sF_t(\mathbf p;h)=-\sum_{i\le t}\frac{\lambda^L_i\,\chi_{i,s}\,\upsilon_s(\mathbf p)}{1+\lambda^L_i\{Z_i(\mathbf p)-h-c\}} .
\]

\paragraph{In practice.}
We take $\mathbf p^\circ$ from projected gradient descent on $\mathcal P_t$, with at most 60
iterations, and rule out $h$ when $\underline F_t(h)\ge\log(1/\alpha_-)$. The gap
$F_t(\mathbf p^\circ;h)-\underline F_t(h)$ is lost evidence, not lost validity, so an
inexact optimizer can only widen the interval.

If several stake sequences each satisfy \eqref{ent:eq:lower-cap}, the logarithm of
their weighted average, $\log\sum_jw_j\exp F^{(j)}_t$ with $w_j\ge0$ and
$\sum_jw_j=1$, is again convex in $\mathbf p$ and nonincreasing in $h$. This holds because
log-sum-exp is convex and increasing in each argument. Mixtures of stake scales are
therefore allowed.

\subsection{Allocation and cross-capped stakes}\label{ent:app:alloc}

\paragraph{Allocation.}
Strata are submitted label $\times$ camera station (300 strata). In our experiment, PAS draws stratum $g$
with probability
\begin{equation}\label{ent:eq:alloc}
  \pi_{g,t}=0.4\frac{M_{g,t}}{\sum_kM_{k,t}}
  +0.3\frac{M_{g,t}(1+\hat\mu_{g,t})}{\sum_kM_{k,t}(1+\hat\mu_{k,t})}
  +0.3\frac{M_{g,t}\hat\sigma_{g,t}}{\sum_kM_{k,t}\hat\sigma_{k,t}},
  \qquad
  \hat\sigma_{g,t}^2=\sum_s q_{g,t}(s)\bigl(\psi_{t,s}-\hat\mu_{g,t}\bigr)^2 .
\end{equation}
\begin{itemize}
  \item \emph{The priority.} The last term is the priority of
  \cref{prop:neyman} applied to the influence $\bm{\psi}_t$.
  \item \emph{The floor.} The middle term implies
  $M_{g,t}(1+\hat\mu_{g,t})/\pi_{g,t}\le\sum_kM_{k,t}(1+\hat\mu_{k,t})/0.3$ for every
  $g$, which bounds the stratum-dependent term of \eqref{ent:eq:lower-cap}.
  \item \emph{The baseline.} SRS sets $\pi_{g,t}=M_{g,t}/\sum_kM_{k,t}$.
\end{itemize}
Every nonempty stratum has positive probability, so \cref{lem:unbiased} applies.

\begin{proposition}\label{ent:prop:cross}
Run both ends on the same audits with $h^{\min}_t=L_{t-1}$ in
\eqref{ent:eq:cap-upper} and $h^{\max}_t=U_{t-1}$ in \eqref{ent:eq:lower-cap},
testing only candidates in $[L_{t-1},U_{t-1}]$. Then \eqref{ent:eq:goal} holds.
\end{proposition}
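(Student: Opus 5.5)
The plan is to reduce the cross-capped procedure to the two uncapped arguments (\cref{ent:prop:upper} and \cref{ent:thm:lower}) on the event where the truth is still inside the previous interval, and to make that reduction rigorous with a stopping-time argument. Define the first failure time $\tau=\inf\{t: H^\star\notin[L_t,U_t]\}$. Before $\tau$ we have $L_{t-1}\le H^\star\le U_{t-1}$, so $H^\star$ is among the candidates tested at audit $t$. In that case the caps with $h^{\min}_t=L_{t-1}\le H^\star$ and $h^{\max}_t=U_{t-1}\ge H^\star$ are exactly what the earlier proofs use at $h=H^\star$. Note that $L_{t-1},U_{t-1}$ are $\mathcal F_{t-1}$-measurable, so the cross-capped stakes are still predictable.

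\textbf{Step 1: stopped processes.} Consider $\widetilde E^U_t=E^U_{t\wedge\tau}(H^\star)$ and $\widetilde E^L_t=E^L_{t\wedge\tau}(\mathbf p^\star;H^\star)$. On $\{\tau\ge t\}$, which is $\mathcal F_{t-1}$-measurable, the following bounds hold.
\begin{itemize}
\item The upper factor satisfies $1-\lambda^U_t(Z^U_t-H^\star)\ge 1-\lambda^U_t(\overline Z^U_t-L_{t-1})\ge0.1$, since $H^\star\ge L_{t-1}$.
\item The lower factor is positive by the computation in \cref{ent:thm:lower}, because the bracket in \eqref{ent:eq:lower-cap} is increasing in $h^{\max}$ and $H^\star\le U_{t-1}$.
\end{itemize}
\Cref{ent:lem:obs} and \eqref{ent:eq:smooth-mean} give conditional means at most one. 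On $\{\tau<t\}$ the stopped process is frozen. Hence both stopped processes are nonnegative supermartingales started at one. Ville (\cref{lem:ville}) then bounds the probability that either ever reaches its threshold by $\alpha_+$ and $\alpha_-$ respectively.

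\textbf{Step 2: failure forces a threshold crossing.} I will show that on $\{\tau<\infty\}$ one of the stopped processes crosses its threshold. At time $\tau$, either $U_\tau<H^\star$ or $L_\tau>H^\star$.

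In the first case, some $h<H^\star$ tested at audit $\tau$ was ruled out, with $h\ge L_{\tau-1}$. Each factor up to $\tau$ is positive and nondecreasing in $h$ on the tested range, because every tested $h$ at audit $s\le\tau$ lies in $[L_{s-1},\infty)$. Hence $E^U_\tau(H^\star)\ge E^U_\tau(h)\ge1/\alpha_+$.

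Here lies the main obstacle. Rejection at an earlier audit $s$ for $h$ requires the factors at audits $i\le s$ to be evaluated at an $h$ that may fall below $L_{i-1}$ for some $i$, because the intervals shrink. The fix is to define rejection only via the running product restricted to the currently tested range and to use intersection monotonicity. Concretely, I would take $U_t=\min(U_{t-1},\cdot)$ and $L_t=\max(L_{t-1},\cdot)$. I would then check that every $h$ in $[L_{s-1},U_{s-1}]$ at audit $s$ also lies in $[L_{i-1},U_{i-1}]$ for all $i\le s$, since the intervals are nested. This makes the monotonicity comparison with $H^\star$ legitimate at each factor, because $H^\star$ lies in those earlier intervals as $\tau\ge s$.

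The second case is symmetric. There is an $h\in(H^\star,U_{\tau-1}]$ with certified profile bound at least $\log(1/\alpha_-)$. Since $\mathbf p^\star\in\mathcal P_\tau$ and $F$ is nonincreasing in $h$ over tested candidates, $F_\tau(\mathbf p^\star;H^\star)\ge\log(1/\alpha_-)$, using \cref{ent:prop:profile} for the certified bound.

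\textbf{Step 3: conclusion.} A union bound gives $\Pr\{\tau<\infty\}\le\alpha_++\alpha_-$, which is \eqref{ent:eq:goal}.
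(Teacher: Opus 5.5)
Your proof is correct and follows the paper's argument. Stopping both processes at the first failure time $\tau$ is the same device as the paper's predictable modified stakes $\lambda^{U\prime}_i=\lambda^U_i\mathbf 1\{H^\star\ge L_{i-1}\}$ and $\lambda^{L\prime}_i=\lambda^L_i\mathbf 1\{H^\star\le U_{i-1}\}$; the paper freezes each side only when its own cap condition fails, while you freeze both at once, which does not change the bound. You then make the same two-case reduction at the failure time, using nested intervals for positivity, monotonicity in $h$, and convexity of the profile at every earlier audit, and finish with the same union bound.
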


\begin{proof}
The endpoints only move inward, and they start at $0\le H^\star\le\log L$. Let
$t_0$ be the first audit after which $H^\star\notin[L_{t_0},U_{t_0}]$. The
modified stakes
\[
  \lambda^{U\prime}_i=\lambda^U_i\mathbf 1\{H^\star\ge L_{i-1}\},\qquad
  \lambda^{L\prime}_i=\lambda^L_i\mathbf 1\{H^\star\le U_{i-1}\}
\]
are predictable, since $H^\star$ is a fixed number. With them, the factors at the
truth are positive and have conditional mean at most one, so both modified processes
are nonnegative supermartingales. They reach $1/\alpha_+$ and $1/\alpha_-$ with
probabilities at most $\alpha_+$ and $\alpha_-$.

For $i\le t_0$, $H^\star\in[L_{i-1},U_{i-1}]$, so the modified stakes equal the
actual ones up to audit $t_0$. A failure at $t_0$ is one of two events:
\begin{itemize}
  \item some $h\in[L_{t_0-1},H^\star)$ has $E^U_{t_0}(h)\ge1/\alpha_+$, which forces
  $E^U_{t_0}(H^\star)\ge1/\alpha_+$;
  \item some $h\in(H^\star,U_{t_0-1}]$ is ruled out by the lower end, which forces
  $F_{t_0}(\mathbf{p^\star};H^\star)\ge\log(1/\alpha_-)$.
\end{itemize}
Either is a crossing of a modified process, and a union bound finishes the proof.
\end{proof}

The proof uses only that the ceiling $h^{\max}_t$ is predictable and never increases,
not that it is an upper confidence bound. So a one-sided lower bound can use any such
ceiling, for example one set near the current plug-in entropy.

\section{Scores beyond the biodiversity pillars}
\label{app:scores}

\paragraph{Ratios.} Here $\langle \mathbf A,\mathbf B\rangle=\sum_{r,s}A_{rs}B_{rs}$ denotes the
entrywise matrix inner product, and $\mathbf U$ and $\mathbf W$ are fixed coefficient
matrices defining the numerator and denominator of the score.
For $g(\mathbf C)=\langle \mathbf U,\mathbf C\rangle/\langle \mathbf W,\mathbf C\rangle$ with $W\geq0$, $\rho^\star=g(\mathbf C^\star)$
satisfies $\langle \mathbf U-\rho^\star \mathbf W,\mathbf C^\star\rangle=0$, so the observation with
contributions $\mathbf U-\rho \mathbf W$ has conditional mean zero at $\rho=\rho^\star$
(\cref{lem:unbiased}). An e-process for each candidate $\rho$ gives a confidence
sequence for recall, $F_1$ and balanced accuracy.

\paragraph{Bounded curvature.}
If $\nabla^2g$ is bounded by $\gamma_g$ on the relevant region, then
$g(\mathbf{p^\star})=g(\mathbf r)+\nabla g(\mathbf r)^\top(\mathbf{p^\star}-\mathbf r)+R$ with $|R|\leq\frac{\gamma_g}2\|\mathbf{p^\star}-\mathbf r\|^2$;
a linear confidence sequence for the first-order term and a composition region for
the remainder, at a separate budget, certify $g$. Shannon entropy is not of this
type, because $\partial^2H/\partial p_s^2=-1/p_s$ is unbounded near zero.

\paragraph{Support-based scores.}
Richness changes by one when a single record reveals a new species. If a procedure
declares an eligible species absent after $T$ reviews and errs with probability at
most $\alpha$ under every alternative in which one record carries it, then
$\E[T]\geq(1-\alpha)N$: couple with each single-record alternative, which agrees
until that record is reviewed, and sum over records. Presence, by contrast, is
certified by one verified record.